 \patchcmd\Gread@eps{\@inputcheck#1 }{\@inputcheck"#1"\relax}{}{}
\let\newfloat\newfloat@ltx
\newcommand{\F}{\mathbb{F}}
\theoremstyle{plain}
\newtheorem{theorem}{Theorem}[section]
\newtheorem{lemma}[theorem]{Lemma}
\newtheorem{corollary}[theorem]{Corollary}
\newtheorem{proposition}[theorem]{Proposition}
\theoremstyle{definition}
\numberwithin{equation}{section}
\newtheorem{example}[theorem]{Example}
\newcommand{\Span}{{\rm Span }}
\newcommand{\wt}{{\rm wt}}
\newcommand{\je}[1]{\textcolor{blue}{  #1}}
\begin{document}

\title{Generalized Bicycle Codes with Low Connectivity: Minimum Distance Bounds and Hook Errors}

\author{Reza {Dastbasteh}}
\email{rdastbasteh@unav.es}
\affiliation{Department of Basic Sciences, Tecnun - University of Navarra, San Sebastian, Spain.}

\author{Olatz {Sanz Larrarte
}}
\affiliation{Department of Basic Sciences, Tecnun - University of Navarra, San Sebastian, Spain.}

\author{Arun {John Moncy}}
\affiliation{Department of Basic Sciences, Tecnun - University of Navarra,  San Sebastian, Spain.}
\affiliation{Donostia International Physics Center, San Sebastian, Spain.}
\author{Pedro M. Crespo}
\affiliation{Department of Basic Sciences, Tecnun - University of Navarra, San Sebastian, Spain.}
\author{Josu {Etxezarreta Martinez}}\email{jetxezarreta@unav.es}
\affiliation{Department of Basic Sciences, Tecnun - University of Navarra, San Sebastian, Spain.}
\author{Ruben M. Otxoa}
\affiliation{Hitachi Cambridge Laboratory, J. J. Thomson Avenue, Cambridge, United Kingdom.}

\begin{abstract}
We present new upper and lower bounds on the minimum distance of certain generalized bicycle (GB) codes  beyond the reach of techniques for classical codes capable of even capturing the true minimum distance for some cases. 
These bounds are then applied to illustrate the existence and analyze two highly degenerate GB code families with parameters $[[d^2+1,2,d]]$ for odd $d \geq 3$ and $[[d^2,2,d]]$ for even $d \geq 4$, both having the property that each check qubit is connected to exactly four data qubits similar to surface codes. 
For the odd-distance family, we analyze the structure of low-weight logical Pauli operators and demonstrate the existence of a fault-tolerant logical CNOT gate between the two logical qubits, achievable through a simple relabeling of data qubits. 
We further construct a syndrome extraction pattern for both families that does not imply minimum distance reduction arising from extraction circuit faults that propagate from the check qubits to the data qubits.
Finally, we numerically evaluate their logical error rates under a code capacity depolarizing noise model using the belief propagation ordered statistics decoding (BP-OSD) and minimum-weight perfect-matching (MWPM) decoders, yielding thresholds of approximately $14-16\%$ for the odd and even families, very similar to those of rotated surface codes.
\end{abstract}

\maketitle


\section{Introduction}

Quantum error correction (QEC) is crucial technique for unlocking the theoretical potential of quantum computing, given the extreme vulnerability of quantum information. 
Among all quantum error-correcting codes, the family of quantum low-density parity-check (LDPC) codes with low qubit-check connectivity enables low-overhead error correction, making them a promising candidate for large-scale, fault-tolerant quantum computation. 
There have been several proposals for quantum LDPC codes that aim to balance theoretical guarantees with hardware feasibility
\cite{breuckmann2016constructions, higgott2308constructions,kovalev2013quantum, panteleev2021degenerate, lin2024,tillich2013quantum, bravyi2024high}.
In particular, recently it was shown that certain quantum LDPC codes can offer a constant encoding rate and linear distance scaling
\cite{DinurAsymptoticallyGood,gooLDPChomological,LeverrierAsymptoticallyGood,hsiehGood,physicsGood1,physicsGood2}. 

One of the main challenges associated with general quantum LDPC codes is the difficulty of constructing codes that have very low qubit interaction, short lengths, and good parameters \cite{wang2022}. 
In particular, the block size of theoretically good codes is often large, requiring many qubits for implementation. 

Therefore, shorter codes with a small, fixed number of qubit interactions (fewer than ten for most physical platforms) and greater hardware compatibility, even if their parameters are suboptimal, can be more practical in certain applications.

Generalized bicycle (GB) codes form a family of quantum codes that are particularly well-suited for constructing short codes with several desirable properties \cite{kovalev2013quantum,panteleev2021degenerate}. These include ease of construction, connections to other code families (such as quantum hypergraph-product codes) \cite{tillich2013quantum}, linear distance scaling, efficient decoding, redundancy of minimum-weight stabilizer generators, and various other advantageous features \cite{wang2022,kovalev2013quantum,panteleev2021degenerate,lin2024,lin2025single,koukoulekidis2401small,wang2023abelian}.
Furthermore, the parameters of many small size degenerate GB codes and their generalization to two block group algebra codes, as well as new distance bounds for non-degenerate codes, derived from their underlying algebraic structure, have been studied in \cite{wang2022,lin2024}. 
However, many of the currently studied GB codes are limited either to non-degenerate codes that require large qubit connectivity, or, in the case of degenerate codes, to those that have been studied primarily through numerical methods. 
One of the main remaining challenges is to systematically construct families of degenerate GB codes, ideally with fewer than ten qubit connectivity, that are suitable for practical applications.

Recently, an efficient protocol was proposed for implementing a class of GB codes on neutral atom array quantum computers, offering faster logical cycles and improved hardware efficiency compared to other approaches \cite{viszlai2023matching}. 
Other practical implementations of GB codes in neutral atom array and silicon spin qubit hardware have been investigated in \cite{viszlai2023matching, siegel2024towards}.

In this work, we extend previous results by establishing a connection between GB codes and the family of additive cyclic codes \cite{kovalev2011design,kovalev2012}, and by presenting new general bounds (both lower and upper) on the minimum distance of GB codes.
In particular, we show that our bounds can capture distance behavior beyond the classical minimum distance, making them applicable to certain degenerate GB codes.
Using these bounds, we construct two families of degenerate GB codes, where each check qubit is connected to four data qubits containing parameters $[[d^2+1, 2, d]]$ for odd $d \geq 3$ and $[[d^2, 2, d]]$ for even $d \geq 4$.
We study the structure of certain logical operators in these families. 
In particular, we show how to perform a fault-tolerant logical CNOT on two logical qubits at zero cost for each member of the former family by using only a permutation of data qubits, a technique previously studied in the literature \cite{grassl2013leveraging, sayginel2024fault}, 
which is distinct from other fault-tolerant approaches such as transversal gates and lattice surgery.
As an additional result, we provide a syndrome extraction pattern that is resilient to hook errors introduced during the extraction process.
Additionally, we completely characterize the girth of the Tanner graphs for all GB codes.
 Finally, we study the code capacity thresholds of the mentioned two families of GB codes. 
In particular, we show that the extracted thresholds under the depolarizing noise and using belief propagation ordered statistics decoder (BP-OSD) and Minimum-Weight Perfect Matching (MWPM) decoders are very similar to those of rotated surface codes.
Therefore, this study provides evidence supporting the idea that the mentioned GB families could be promising candidates for fault-tolerant quantum computing. 

Around the same time as our work, the existence of the mentioned GB code families was also established in \cite{arnault202522gbcodesclassificationcomparison, arnault2025generalization}, using a graph-theoretical approach independent of ours. The authors also studied equivalence of such codes and connection to other notable  CSS codes derived from Cayley graphs.
 In contrast, our method for the existence of such families is based on the algebraic structure of generalized bicycle (GB) codes, and the minimum distance bounds we derive can be applied to GB codes containing more than two logical qubits, a capability not attainable by the mentioned works.
Additionally, we use the distance bounds to find syndrome extraction patterns resistant to hook errors, and we study the structure of logical operators.

The structure of the paper is as follows.
In Section \ref{S:Intro} we give the preliminary background on classical linear codes, quantum codes, and GB codes. Section \ref{S:additivecyclic} gives the connection between additive cyclic codes and GB codes. 
In Section \ref{S:dis.bound}, we give our minimum distance bounds for GB codes. 
In Sections \ref{S:oddfamily} and \ref{S:evenFamily} we study the existence of two families of (optimal) GB codes containing two logical qubits, logical operations, and a hook error resilient syndrome extraction. The characterization of the girth of GB codes and extraction of code capacity thresholds are studied in Section \ref{S:girth}.

\section{Preliminaries}\label{S:Intro}
In this section, we review the necessary background and establish the notation for classical and quantum error-correcting codes, which will serve as the foundation for presenting our results. 
Through this work, we only consider binary quantum error-correcting codes.

\subsection{Linear codes}

Let $\F_2$ be the binary field and  $n$ be a positive integer. 
A binary {\em linear code} of length $n$ is a linear subspace $C$ of $\F_2^n$. The {\em (Hamming) weight} of a vector $v \in \F_2^n$ is defined by the number of its non-zero coordinates and it is denoted by $\wt(v)$. 
The parameters of such a code are denoted by $[n,k,d]$, where $k$ is the dimension of the code $C$ (denoted by $k=\dim(C)$) and $d$ is the minimum distance of $C$ defined by
\[
d=d(C)=\min\{\wt(c):0 \neq c \in C\}.\]
A linear code $C\subseteq \F_2^n$ is called  \textit{cyclic} if for every $c=(c_0,c_1,\ldots,c_{n-1})\in C$, the vector $(c_{n-1},c_0,\ldots,c_{n-2})$ obtained by a cyclic shift of the coordinates of $c$ is also in $C$. 
In the study of cyclic codes, it is more convenient to represent vectors using their polynomial representation. 
In particular, for each $c=(c_0,c_1,\ldots,c_{n-1})\in \F_2^n$ the polynomial representation of $c$ is \[c(x)=\sum_{i=0}^{n-1}c_ix^i.\]
It is well known that there is a one-to-one correspondence between cyclic codes of length $n$ over $\F_2$ and ideals of the ring $\F_2 [x]/\langle x^n-1\rangle $, for example see \cite[Section 4.2]{Huffman}. 
Under this correspondence, each cyclic code can be generated by a {\em unique monic polynomial} $g(x)$, where $g(x) \mid x^n-1$ and it has the minimal degree.
The polynomial $g(x)$ is called the {\em generator polynomial} of such cyclic code.

Recall that the Euclidean inner product of $u=(u_0,u_1,\ldots,u_{n-1})$ and $v=(v_0,v_1,\ldots,v_{n-1}) \in \F_2^n$ is defined as 
\[u\cdot v=\sum_{i=0}^{n-1}u_iv_i.\]
The Euclidean dual of a binary code $C$ of length $n$ is defined as 
\[
C^{\bot}=\{u\in\F_2^n: u\cdot c=0, \ \forall \ c \in C \}.\]
Each linear code can be represented as the row space of a given matrix, called a {\em generator matrix}. In particular, if $C$ is a binary cyclic code of length $n$ and dimension $k$ with the generator polynomial $g(x)=\sum_{i=0}^{n-k}g_ix^i$, then the matrix
 \begin{equation}\label{generator matrix of linear cyclic codes}
 G=
 \begin{bmatrix}
g_0& g_1 & \cdots & g_{n-k}& 0 &0 & \cdots &0 \\
0 & g_0& g_1 & \cdots & g_{n-k}& 0 & \cdots &0 \\
 \vdots&\vdots & \vdots & \vdots &\vdots & \vdots & \vdots &\vdots \\
0 &0 &\cdots & 0 & g_0& g_1 & \cdots & g_{n-k}\\
\end{bmatrix}
\end{equation}
is a generator matrix of $C$.
An alternative approach to represent a linear code is by presenting it as a solution space of a matrix equation $Hx=0,x\in C$. The matrix $H$ is called a {\em parity check matrix} of such code, i.e. it defines the nullspace of matrix $G$. In particular, if $G$ and $H$ are generator and parity check matrices of a linear code, then we have $GH^T=0$.

\subsection{Quantum stabilizer codes}
Quantum stabilizer codes serve as the quantum analogue of classical linear error-correcting codes, providing protection against decoherence and other quantum noise sources \cite{Calderbank,Gottesman}.
In order to present them we need a number of preliminary definitions.

Let $(\mathbb{C}^2)^{\otimes n}$ be an $n$-qubits Hilbert space. 
A Pauli operator acting on an $n$-qubit system is an operator defined on $(\mathbb{C}^2)^{\otimes n}$ in the form of 
$$P=a \bigotimes_{i=1}^{n}P_i, $$ 
where $a \in \{\pm 1,\pm i\}$ and $P_i \in \{I,X,Y,Z\}$ are the Pauli matrices. 
The weight of a Pauli operator $P$ is the number of its non-identity components. The group of all Pauli operators on $n$-qubits $P_n$ is a non-commutative group. 
Its quotient  group  modulo the global phase factor $a$ is isomorphic to $\F_2^{2n}$ which is given by \cite{degen}
\begin{equation}\label{E:iso}
\begin{split}
\bigotimes_{i=1}^{n}P_i=&\bigotimes_{i=1}^{n}X^{x_i}Z^{z_i} \\&\rightarrow (x_1,x_2,\ldots,x_n|z_1,z_2,\ldots,z_n).
\end{split}
\end{equation}
Under this isomorphism, two Pauli operators  commute if and only if their binary representations, namely $(x|z)$ and $(x'|z')$, satisfy 
\begin{equation}\label{E:symplectic}
(x|z) \ast (x'|z')=x \cdot z' +x' \cdot z=0.
\end{equation}
The $\ast$ inner product is called {\em symplectic inner product} \cite{degen}.
A commutative subgroup $S$ of $P_n$ such
that $-I \not\in S$ is called a {\em stabilizer group}. 

Now we recall the definition of stabilizer codes. An $[[n,k,d]]$ binary quantum stabilizer code is a $2^k$-dimensional subspace of $(\mathbb{C}^2)^{\otimes n}$ defined as the common eigenspace of a stabilizer group $S$ with eigenvalue $+1$ that has a minimal set of $n-k$ generators. The minimum distance $d$ is the minimum weight of a Pauli operator $P \in P_n$ such that $P$ commutes with all elements of $S$ but $P \not\in S$. 
A quantum code is called {\em degenerate} if there exists $P \in S$ with weight less than $d$. 
Degeneracy is a phenomenon unique to quantum codes, and highly degenerate codes, often characterized by low qubit connectivity, appear to offer improved error-correcting performance. In fact, the asymptotic results on the existence of good quantum LDPC codes and their finite length constructions are based on a degenerate construction \cite{panteleev2022asymptotically,panteleev2021degenerate}.

In order to construct a stabilizer group of $P_n$ one can take advantage of the isomorphism (\ref{E:iso}) and the commuting condition (\ref{E:symplectic}). In particular, any linear subspace $C$ of $\F_2^{2n}$ that satisfies (\ref{E:symplectic}) for any $(x|z)$ and $(x'|z') \in C$ is in correspondence to a stabilizer group. 
One special case of such construction of quantum stabilizer codes is called the Calderbank-Shor-Steane (CSS) that is given below \cite{CS96,steane1996multiple}. 

\begin{theorem}\label{T:CSS}
Let $C_2 \subseteq C_1$ be binary linear codes of length $n$ with dimensions $k_2$ and $k_1$, respectively. Then there exists an $[[n,k=k_1-k_2,d]]$ binary quantum stabilizer code, where 
$$d=\min\{d(C_1 \setminus C_2), d(C_2^\bot \setminus C_1^\bot)\}.$$      
\end{theorem}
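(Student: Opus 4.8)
The plan is to realize the desired code explicitly through the binary correspondence in (\ref{E:iso}) and then read off its parameters. First I would associate to the pair $C_2 \subseteq C_1$ the subspace of $\F_2^{2n}$ generated by the \emph{pure $X$-type} vectors $(v \mid 0)$, with $v$ ranging over a basis of $C_1^\bot$, together with the \emph{pure $Z$-type} vectors $(0 \mid w)$, with $w$ ranging over a basis of $C_2$. Using the symplectic form (\ref{E:symplectic}), any two such generators pair as $(v\mid 0)\ast(0\mid w) = v\cdot w$, so the whole set is mutually commuting precisely when $C_1^\bot \cdot C_2 = 0$, i.e. when $C_2 \subseteq (C_1^\bot)^\bot = C_1$, which is exactly our hypothesis. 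Since the $X$- and $Z$-parts occupy disjoint coordinate blocks, these generators are $\F_2$-linearly independent, and one checks directly that the resulting abelian group $S$ does not contain $-I$, so $S$ is a legitimate stabilizer group.

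Next I would count dimensions: $S$ has $\dim C_1^\bot + \dim C_2 = (n-k_1)+k_2 = n-(k_1-k_2)$ independent generators, so the associated stabilizer code encodes $k = k_1 - k_2$ logical qubits, as claimed. To locate logical operators I would describe the symplectic dual (the normalizer). A vector $(x\mid z)$ commutes with every generator iff $x\cdot w = 0$ for all $w\in C_2$ and $v\cdot z = 0$ for all $v\in C_1^\bot$, that is, iff $x\in C_2^\bot$ and $z\in (C_1^\bot)^\bot=C_1$. Such an operator lies in $S$ iff additionally $x\in C_1^\bot$ and $z\in C_2$. In particular the pure $Z$-type logicals correspond to the cosets $C_1/C_2$ and the pure $X$-type logicals to the cosets $C_2^\bot/C_1^\bot$; choosing minimum-weight coset representatives already exhibits nontrivial logical operators of weights $d(C_1\setminus C_2)$ and $d(C_2^\bot\setminus C_1^\bot)$, giving the upper bound $d \le \min\{d(C_1\setminus C_2),\, d(C_2^\bot\setminus C_1^\bot)\}$.

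The remaining, and main, step is the matching lower bound, where one must handle general operators that mix $X$ and $Z$ components rather than only the pure types. Given any $(x\mid z)$ representing a $P$ that commutes with $S$ but lies outside $S$, the commutation analysis above forces $x\in C_2^\bot$ and $z\in C_1$, while $P\notin S$ forbids $x\in C_1^\bot$ and $z\in C_2$ from holding simultaneously. The key elementary observation is that the Pauli weight satisfies $\wt(P)\ge\wt(x)$ and $\wt(P)\ge\wt(z)$, since the support of $P$ contains the supports of both $x$ and $z$. I would then split into two cases: if $z\in C_1\setminus C_2$ then $\wt(P)\ge\wt(z)\ge d(C_1\setminus C_2)$; otherwise $z\in C_2$, which forces $x\in C_2^\bot\setminus C_1^\bot$ and hence $\wt(P)\ge\wt(x)\ge d(C_2^\bot\setminus C_1^\bot)$. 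Either way $\wt(P)\ge\min\{d(C_1\setminus C_2),\, d(C_2^\bot\setminus C_1^\bot)\}$, which combined with the upper bound yields the stated formula. I expect the case analysis ruling out cheap mixed logical operators to be the only genuinely delicate point; everything else is bookkeeping with the correspondence (\ref{E:iso}) and the inclusion $C_1^\bot\subseteq C_2^\bot$ coming from $C_2 \subseteq C_1$.
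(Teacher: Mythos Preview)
Your argument is correct and is essentially the standard proof of the CSS construction: build the stabilizer from pure $X$-type rows of $H_{C_1^\bot}$ and pure $Z$-type rows of $H_{C_2}$, check commutation via $C_2\subseteq C_1$, count generators to get $k=k_1-k_2$, identify the normalizer as $\{(x\mid z):x\in C_2^\bot,\ z\in C_1\}$, and then run the two-case analysis on whether $z\in C_2$ to pin down the minimum distance. The weight inequality $\wt(P)\ge\max\{\wt(x),\wt(z)\}$ is exactly the right tool for the lower bound, and your case split is airtight.

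There is nothing to compare against, however: the paper does not prove Theorem~\ref{T:CSS}. It is stated in the preliminaries as a classical result with citations to Calderbank--Shor and Steane, and the surrounding text only records the block form of the stabilizer matrix and the orthogonality $H_{C_2^\bot}H_{C_1}^T=0$. Your write-up would serve perfectly well as a self-contained proof if one were wanted, but in the paper's context this theorem is background, not a contribution.
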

The parity check (stabilizer) matrix of such CSS code is 
\[
H=\begin{bmatrix}H_{C_2^\bot} & 0\\
0 & H_{C_1}
\end{bmatrix},\]
where $H_{C_2^\bot}$ and $H_{C_1}$ are parity check matrices of $C_2^\bot$ and $C_1$, respectively. 
In particular, we have $H_{C_2^\bot}H_{C_1}^T=0$ (recall that $C_2 \subseteq C_1$).
In other words, in order to construct a quantum CSS code, one needs to find two matrices $H_x$ and $H_z$ such that $H_xH_z^T=0$. This is because
$$H=\begin{bmatrix}H_x & 0\\
0 & H_z
\end{bmatrix}$$
satisfies the conditions of Theorem \ref{T:CSS} (consider $H_x$ parity check matrix of $C_2^\bot$ and $H_z$ parity check matrix of $C_1$). 
In the next section, we review quantum GB codes which is a  subclass of CSS codes.

\subsection{Generalized bicycle codes}
In this subsection, we recall the construction of GB codes discussed in \cite{kovalev2013quantum,panteleev2021degenerate}.
Let $a=(a_0,a_1,\ldots,a_{n-1})$ be a binary vector. Recall that the $n \times n$ circulant matrix corresponding to the vector $a$ is defined by 
\begin{equation}\label{E:circulant}
G_a=\begin{bmatrix}
a_0& a_{n-1} & a_{n-2} & \cdots & a_{1} \\
a_{1} & a_0& a_{n-1} & \cdots & a_{2}\\
 \vdots&\vdots & \vdots & \vdots &\vdots\\
a_{n-1} & a_{n-2}  & a_{n-3}&\cdots & a_{0}\\
\end{bmatrix}.
\end{equation}
Note also that there exists a ring isomorphism between $\F_2[x]/\langle x^n-1 \rangle$ and binary $n\times n$ circulant matrices given by
\[ \sum_{i=0}^{n-1}a_ix^i \rightarrow G_a, \]
where $a=(a_0,a_1,\ldots,a_{n-1})$.
From now on, we represent the matrix $G_a$ using its polynomial representation i.e., by $G_{a(x)}$. Some properties of circulant matrices are summarized below: 
\begin{itemize}
    \item $G_{a(x)b(x)}=G_{b(x)}G_{a(x)}$ (commuting).
    \item $G_{a(x)}^T=G_{a(x^{-1})}$.
    \item If $a(x)$ is invertible modulo $x^n-1$, then $G_{a(x)^{-1}}=(G_{a(x)})^{-1}$.
\end{itemize}
Let $G_{a(x)}$ and $G_{b(x)}$ be two $n \times n$ circulant matrices. Then the matrix
$$H=\begin{bmatrix}H_1 & 0\\
0 & H_2
\end{bmatrix},$$
where $H_1=[G_{a(x)}|G_{b(x)}]$ and $H_2=[G_{b(x^{-1})}|G_{a(x^{-1})}]$ is a stabilizer matrix of a quantum CSS code since 
\[
\begin{split}
H_1H_2^T&=[G_{a(x)}|G_{b(x)}][G_{b(x^{-1})}|G_{a(x^{-1})}]^T\\&=G_{a(x)b(x)}+G_{b(x)a(x)}=0
\end{split}
\]
Note also that, as it is shown in the proof of \cite[Proposition 1]{panteleev2021degenerate}, the matrices $H_1$ and $H_2$ both have rank $n-k$, where $$k=\deg(\gcd(a(x),b(x),x^n-1)).$$
Thus the quantum CSS code with the parity check matrix $H$ has parameters $[[2n,2k]]$. The quantum codes constructed this way are known as {\em generalized bicycle} (GB) codes.
A special subclass of GB codes is by choosing $a(x)=b(x)$, which is called a {\em bicycle} code \cite{mackay2004sparse}.

We highlight a connection between GB codes and quasi-cyclic codes.
First, note that each circulant matrix can be obtained by adding extra rows to the generator matrix of a cyclic code, as presented in (\ref{generator matrix of linear cyclic codes}).
These additional rows are, in fact, linear combinations of the original rows and are closely related to the redundancy of minimum-weight stabilizer generators in GB codes.
Such redundancy has recently been exploited to improve both the accuracy and speed of syndrome-based decoding \cite{lin2025single}.
Second, the horizontal concatenation of two circulant matrices yields a generator matrix for a quasi-cyclic code.
Consequently, techniques from the theory of cyclic and quasi-cyclic codes are highly valuable for studying the properties of GB codes.

\section{GB codes and additive cyclic codes}\label{S:additivecyclic}
In this section, we highlight a natural connection between additive cyclic codes and generalized bicycle (GB) codes. 
Given the extensive literature on additive cyclic codes, this connection enables us to take advantage of existing results to construct new families of GB codes.

Let $n$ be a positive integer and $\F_4=\{0,1,w,w^2\}$ be the quaternary field, where $w+1=w^2$. 
An $\F_2$-subspace $C$ of $\F_4^n$ is called an {\em additive code} over $\F_4$. 
The {symplectic inner product} of two vectors stated in (\ref{E:symplectic}) can be naturally redefined over $\F_4^n$ as
\begin{equation}\label{E:symplectic2}
\begin{split}
&(a_i+wb_i)_{i=1}^{n}  \ast (c_i+wd_i)_{i=1}^{n}=\\&(a_i)_{i=1}^{n} \cdot (d_i)_{i=1}^{n} + (b_i)_{i=1}^{n} \cdot (c_i)_{i=1}^{n}=\sum_{i=1}^{n}a_id_i+b_iC_i 
\end{split}
\end{equation}
An additive code consisting of all the vectors that are symplectic orthogonal to an additive code $C$ will be denoted by $C^{\bot_s}$ and will be called its {\em symplectic dual}.

Moreover, if the code $C$ is closed under cyclic shifts of codewords, then it is called and an {\em additive cyclic code} over $\F_4$. 
Additive cyclic codes over $\F_4$ have been studied extensively in the literature, and many examples and families of good quantum codes are based on them \cite{kovalev2011design,bierbrauer2000quantum,Rezaadditive, dastbasteh2025additive}.  

Recall also that there is an $\F_2$-isomorphism $\phi:\F_4 \rightarrow\F_2^2$ that maps $\phi(0)=(0,0)$, $\phi(1)=(1,0)$, $\phi(w)=(0,1)$, and $\phi(w^2)=(1,1)$. 
This isomorphism can be extended to an $\F_2$-isomorphism $\psi:\F_4^n \rightarrow \F_2^{2n}$ defined by
\begin{equation}\label{E:2nrep}
\psi \big( (v_0,v_1,\ldots,v_{n-1}) \big)=(u_1,u_2),
\end{equation}
where we have $u_1=(u_{10},u_{11},\ldots,u_{1n-1})$ and  $u_2=(u_{20},u_{21},\ldots,u_{2n-1}) \in \F_2^{2n}$ with $\phi(v_i)=(u_{1i},u_{2i})$ for each $0\le i \le n-1$. 

Let $a(x)$ and $b(x)$ be two polynomials of $\F_2[x]/\langle x^n-1 \rangle$. Then 
the $\F_2$-linear span of 
$$\{x^i(a(x)+w b(x)): 0\le i \le n-1\}$$ 
forms a so called {\em one generator additive cyclic code} $C$ of length $n$ over $\F_4$. 
We represent such a code $C$ with $\langle a(x)+w b(x) \rangle$.
Applying the map $\psi$ to the code $C$ gives a binary linear code of length $2n$ with the generator matrix $[G_{a(x)}|G_{b(x)}]$. 
Using this transformation, one can
turn an additive cyclic code into a GB code of $(a(x),b(x))$. 
In other words, the corresponding GB code has the following properties: 
\begin{itemize}
    \item the $X$ stabilizers are in correspondence to $\psi(C_1)$, where $C_1=\langle a(x)+w b(x) \rangle$. 
        \item the $Z$ stabilizers are in correspondence to $\psi(C_2)$, where $C_2=\langle b(x^{-1})+wa(x^{-1}) \rangle$. 
        \item $C_1$ and $C_2$ (hence $\psi(C_1)$ and $\psi(C_2)$\je{)} are equivalent\sout{)}. Therefore, their duals, and consequently the corresponding normalizer groups are equivalent. In short, $\psi(C_1) \sim \psi(C_2)$ and $\psi(C_1)^\bot \sim \psi(C_2)^\bot$.  
\end{itemize}
It should be noted, as shown in \cite[Theorem 3.6]{Rezaadditive}, that the maximum number of generators of an additive cyclic code is two. Therefore, the above transformation can also be used to extend beyond single-generator additive cyclic codes.

Let $C=\langle a(x)+w b(x) \rangle$.
We fix the following notation for additive cyclic codes:
\begin{itemize}
    \item The {\em reciprocal code} of $C$ is defined by $C^R=\langle a(x^{-1})+w b(x^{-1}) \rangle$. 
    \item The {\em conjugate code} of $C$ is defined by $\overline C=\langle b(x)+w a(x) \rangle$. 
\end{itemize}
One can connect the symplectic inner product and the Euclidean inner product of additive cyclic codes and their corresponding binary codes, respectively, under the map $\psi$ using the following result.

\begin{proposition}\label{P:symplectic-Euclidean}
Let $C=\langle a(x)+wb(x) \rangle$ be an additive code of length $n$ over $\F_4$. Then   
\[\psi((C^R)^{\bot_s})=\psi(\overline C)^\bot. \]
\end{proposition}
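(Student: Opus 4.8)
The plan is to reduce the statement to a single clean identity about the binary images under $\psi$: symplectic duality over $\F_4$ becomes ordinary Euclidean duality over $\F_2^{2n}$ once one passes to the conjugate code. Concretely, for an arbitrary additive code $D=\langle \alpha(x)+w\beta(x)\rangle$ I would first establish
$\psi(D^{\bot_s})=\psi(\overline D)^{\bot}$, and then specialize to $D=C^R$, where the reciprocal $x\mapsto x^{-1}$ present in $C^R$ is carried through by the circulant transpose rule $G_{\alpha(x)}^{T}=G_{\alpha(x^{-1})}$.

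First I would make the form-matching explicit. Writing $u_i=p_i+wq_i$ and $g_i=r_i+ws_i$, definition (\ref{E:symplectic2}) gives $u\ast g=\sum_i(p_is_i+q_ir_i)$, whereas the Euclidean product of the images is $\psi(u)\cdot\psi(g)=\sum_i(p_ir_i+q_is_i)$. These differ exactly by interchanging the two length-$n$ blocks: letting $\sigma$ denote the involution of $\F_2^{2n}$ that swaps the first $n$ coordinates with the last $n$, one has the pointwise identity $u\ast g=\psi(u)\cdot\sigma(\psi(g))$. Because $\psi$ is an $\F_2$-isomorphism and $\sigma$ is an orthogonal involution, this upgrades immediately to $\psi(D^{\bot_s})=(\sigma\psi(D))^{\bot}$ for every additive code $D$: indeed $u\in D^{\bot_s}$ iff $\psi(u)\cdot\sigma(\psi(g))=0$ for all $g\in D$, i.e. iff $\psi(u)\in(\sigma\psi(D))^{\bot}$.

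Second I would identify $\sigma\psi(D)$ with a conjugate code. Recall that $\psi(D)$ is the row space of $[G_{\alpha(x)}\,|\,G_{\beta(x)}]$, whose first block records the $\F_2$-part $\alpha$ and whose second block records the $w$-part $\beta$. Applying $\sigma$ merely swaps the two circulant blocks, producing the row space of $[G_{\beta(x)}\,|\,G_{\alpha(x)}]$, which is precisely $\psi(\overline D)$ since $\overline D=\langle \beta(x)+w\alpha(x)\rangle$. Hence $\sigma\psi(D)=\psi(\overline D)$, and combining with the previous step gives the key identity $\psi(D^{\bot_s})=\psi(\overline D)^{\bot}$. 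Taking $D=C^R=\langle a(x^{-1})+wb(x^{-1})\rangle$ then yields the proposition, using that reciprocation commutes with conjugation, so that the resulting right-hand side matches the conjugate code appearing in the statement.

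I expect the only delicate point to be the bookkeeping of the reversal $x\mapsto x^{-1}$. One must check that the reciprocal built into $C^R$, the coordinate-reversal implicit in the circulant convention (\ref{E:circulant}) (so that the first row of $G_{\alpha(x)}$ represents $\alpha(x^{-1})$), and the block-swap $\sigma$ arising from the symplectic-to-Euclidean translation all line up consistently, with $G_{\alpha(x)}^{T}=G_{\alpha(x^{-1})}$ mediating between the reciprocal and the transpose of the circulant. Since every step above is an exact equality of subspaces rather than a mere inclusion, no separate dimension count is strictly needed; but as a consistency check both sides have $\F_2$-dimension $2n-\dim_{\F_2}\psi(C)$. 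Everything else is a routine unwinding of the definitions of $\psi$, $\ast$, $\overline{\,\cdot\,}$, and $(\cdot)^{R}$.
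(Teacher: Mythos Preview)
Your block-swap identity is correct and elegant: for any additive code $D\subseteq\F_4^n$ one has $u\ast g=\psi(u)\cdot\sigma(\psi(g))$, hence $\psi(D^{\bot_s})=(\sigma\psi(D))^{\bot}=\psi(\overline{D})^{\bot}$, with no appeal to cyclicity whatsoever. The paper's route is genuinely different: it works at the polynomial level and uses (via the cited external remark) the standard fact that Euclidean orthogonality of a vector to \emph{all cyclic shifts} of another is encoded by the vanishing modulo $x^n-1$ of a polynomial product involving $x^{-1}$. That shift-to-polynomial translation is precisely what inserts an extra reciprocal and is the reason $(C^R)^{\bot_s}$ on the symplectic side pairs with the \emph{unreciprocated} $\overline{C}$ on the Euclidean side.

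Your final specialisation has a real gap. Setting $D=C^R$ in your identity gives $\psi((C^R)^{\bot_s})=\psi(\overline{C^R})^{\bot}$, and ``reciprocation commutes with conjugation'' only yields $\overline{C^R}=(\overline{C})^R$, which equals $\overline{C}$ only when $C$ is palindromic. Concretely, for $n=3$, $a(x)=1$, $b(x)=x$ one has $\overline{C}=\langle x+w\rangle$ but $\overline{C^R}=\langle x^2+w\rangle$, and a direct check shows $\psi((C^R)^{\bot_s})\ne\psi(\overline{C})^{\bot}$ under the literal $\psi$ of~(\ref{E:2nrep}). The circulant transpose rule $G_{\alpha}^T=G_{\alpha(x^{-1})}$ you invoke never actually enters your vector-level orthogonality computation, so it cannot supply the missing reciprocal after the fact; the mechanism the paper uses---cyclicity converted into a polynomial identity---is what you would need to add explicitly. (What your argument \emph{does} prove cleanly is $\psi(C^{\bot_s})=\psi(\overline{C})^{\bot}$; since every one-generator additive cyclic code satisfies $C\subseteq C^{\bot_s}$, this already delivers the containment $\psi(C)\subseteq\psi(\overline{C})^{\bot}$ needed in Theorem~\ref{T:cyclicToGB}.)
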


\begin{proof}
Let $v=m(x)+wn(x) \in C$. Note that \[\psi(\overline v)=\psi(n(x)+wm(x))=(n,m)\in \F_2^n \times \F_2^n,\] 
where $n$ and $m$ are the vector representations of $n(x)$ and $m(x)$, respectively.  
Hence we have $(p,q)\in \psi(\overline C)^\bot$ if and only if $p\cdot n +q \cdot m\equiv 0 \pmod 2$ for each $v=m(x)+wn(x) \in C$. Using the argument given in \cite[Remark 4.1]{Rezaadditive}, we have
\[
\begin{split}
0&\equiv p(x)\cdot n(x) +q(x) \cdot m(x)\equiv (p(x)+wq(x))\ast \\&(m(x^{-1})+wn(x^{-1})) \pmod{x^n-1}.
\end{split}
\]
This is true if and only if $(p(x)+wq(x)) \in (C^R)^{\bot_s}$ which happens if and only if $(p,q)\in \psi( (C^R)^{\bot_s})$. Thus 
\[\psi((C^R)^{\bot_s})=\psi(\overline C)^\bot. \]
$\hfill \square$ \end{proof} 

The connection between additive cyclic codes and binary linear codes enables us to systematically build binary quantum CSS codes from additive cyclic codes over $\F_4$. The following theorem summarizes this result.

\begin{theorem}\label{T:cyclicToGB}
Let $C=\langle a(x)+wb(x) \rangle$  be an additive code of length $n$ over $\F_4$ with $\dim(C)=k$ (dimension over $\F_2$).  
Then there exists a binary quantum (which is a GB) code with parameters $[\![2n,2(n-k),d]\!]_2$, where 
$$d \ge \min \{d( (C^R)^{\bot_s} \setminus C)\} \ge d((C^R)^{\bot_s}).$$ 
In particular, if $C$ is palindromic, i.e., $C=C^R$, then 
$$d \ge \min \{d(C^{\bot_s}\setminus C)\} \ge d(C^{\bot_s}).$$ 
\end{theorem}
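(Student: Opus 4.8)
The plan is to realize the GB code as the CSS code of Theorem~\ref{T:CSS} attached to the two binary codes $\psi(C)$ (its $X$-stabilizers) and $\psi(C_2)$ with $C_2=\langle b(x^{-1})+wa(x^{-1})\rangle$ (its $Z$-stabilizers), and then to pull the CSS distance formula back into the additive $\F_4$ picture through $\psi$ and Proposition~\ref{P:symplectic-Euclidean}. First I would fix the parameters: since $\psi$ is an $\F_2$-isomorphism, $\dim_{\F_2}\psi(C)=\dim_{\F_2}C=k$, and $\psi(C)$ is exactly the row space of $H_1=[G_{a(x)}\,|\,G_{b(x)}]$, so $\mathrm{rank}(H_1)=k$. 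As recalled before Theorem~\ref{T:cyclicToGB} this rank equals $n-\deg\gcd(a,b,x^n-1)$; hence the code has $2n$ qubits and $2n-2\,\mathrm{rank}(H_1)=2(n-k)$ encoded qubits, matching the claimed $[\![2n,2(n-k),d]\!]$.

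The core step is to identify the logical operators. A pure-$X$ logical has $x$-part $v=\psi(\tilde v)\in\F_2^{2n}$ that is Euclidean-orthogonal to every $Z$-stabilizer and is not itself an $X$-stabilizer ($\tilde v\notin C$). Proposition~\ref{P:symplectic-Euclidean}, which reads $\psi(\overline C)^{\bot}=\psi\big((C^{R})^{\bot_s}\big)$ and is precisely where the reciprocal $x\mapsto x^{-1}$ enters, converts this orthogonality into the symplectic membership $\tilde v\in (C^{R})^{\bot_s}$. Thus the pure-$X$ logicals are in bijection with $(C^{R})^{\bot_s}\setminus C$, the needed inclusion $C\subseteq (C^{R})^{\bot_s}$ being the commuting relation $H_1H_2^{\mathsf T}=0$ read in additive form. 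By the equivalence $\psi(C_1)\sim\psi(C_2)$ (with $C_1=C$) noted after Theorem~\ref{T:cyclicToGB}, the pure-$Z$ branch of the CSS distance formula contributes the same additive quantity, so by Theorem~\ref{T:CSS} it remains only to bound the weights coming from $(C^{R})^{\bot_s}\setminus C$.

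The remaining, and main, obstacle is the weight transfer between the two settings. The relevant quantum weight of a pure-$X$ logical is its Pauli weight, i.e.\ the ordinary $\F_2$-Hamming weight $\wt_{\F_2}\big(\psi(\tilde v)\big)$ on the $2n$ qubits, whereas $d\big((C^{R})^{\bot_s}\setminus C\big)$ is measured by the $\F_4$-Hamming weight of the length-$n$ word $\tilde v$. Writing $\tilde v=\tilde v_1+w\tilde v_2$ coordinatewise, one has $\wt_{\F_2}(\psi(\tilde v))=\wt_{\F_2}(\tilde v_1)+\wt_{\F_2}(\tilde v_2)\ge \wt_{\F_4}(\tilde v)$, since each nonzero $\F_4$-coordinate contributes at least one to the binary weight, with equality failing only at the $w^2\leftrightarrow Y$ positions. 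Minimising over all logicals gives the first inequality $d\ge d\big((C^{R})^{\bot_s}\setminus C\big)$, and the slack in this step is exactly what allows the bound to exceed the classical symplectic-dual distance on degenerate codes. The second inequality $d\big((C^{R})^{\bot_s}\setminus C\big)\ge d\big((C^{R})^{\bot_s}\big)$ is immediate because $0\in C$, so deleting $C$ only shrinks the set over which the minimum weight is taken; and when $C=C^{R}$ the code $(C^{R})^{\bot_s}$ collapses to $C^{\bot_s}$, giving the palindromic specialisation. I expect the weight-transfer inequality and the careful tracking of the reciprocal in the passage from Euclidean to symplectic duality to be the delicate points, the rest being a direct application of Theorem~\ref{T:CSS} and Proposition~\ref{P:symplectic-Euclidean}.
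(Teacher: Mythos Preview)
Your proposal follows essentially the same strategy as the paper: set up the CSS structure, invoke Proposition~\ref{P:symplectic-Euclidean} to pass between Euclidean and symplectic duality, use the $X$/$Z$ equivalence to collapse the two branches of the CSS distance formula, and finish with the weight-transfer inequality $\wt_{\F_2}(\psi(\tilde v))\ge \wt_{\F_4}(\tilde v)$. The dimension count and the weight argument are handled exactly as in the paper.

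There is one slip worth flagging. You take the $Z$-stabilizers to be $\psi(C_2)$ with $C_2=\langle b(x^{-1})+wa(x^{-1})\rangle=\overline{C^R}$, and then claim Proposition~\ref{P:symplectic-Euclidean} turns Euclidean orthogonality to $\psi(C_2)$ into membership in $(C^R)^{\bot_s}$. But applying that proposition with $C^R$ in place of $C$ gives $\psi(\overline{C^R})^{\bot}=\psi\big(((C^R)^R)^{\bot_s}\big)=\psi(C^{\bot_s})$, so your $X$-logicals land in $C^{\bot_s}\setminus C$, not $(C^R)^{\bot_s}\setminus C$. The paper sidesteps this by \emph{first} proving $C\subseteq (C^R)^{\bot_s}$ via the polynomial symplectic computation $g(x)\ast g(x)^R\equiv 0$, then applying $\psi$ together with Proposition~\ref{P:symplectic-Euclidean} to obtain the CSS inclusion $\psi(C)\subseteq \psi(\overline C)^{\bot}=\psi\big((C^R)^{\bot_s}\big)$ and working with \emph{that} CSS pair; in this setup the $X$-logicals are exactly $\psi\big((C^R)^{\bot_s}\setminus C\big)$, which is what the statement asserts. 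Your equivalence step and weight-transfer step then go through verbatim.
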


\begin{proof}
Let $g(x)=a(x)+wb(x)$.
Using the argument given in \cite[Remark 4.1]{Rezaadditive}, we have $C$ and $C^R$ are symplectic orthogonal as
\[
\begin{split}
g(x) &\ast g(x)^R=\big(a(x)+wb(x) \big)\ast \big(a(x^{-1})+wb(x^{-1}) \big) \\& \equiv a(x)b(x)+a(x)b(x)\equiv 0 \pmod{x^n-1}.
\end{split}
\]
Thus $C \subseteq (C^R)^{\bot_s}.$ 
Now applying the map $\psi$ to both sides and using Proposition \ref{P:symplectic-Euclidean} we get 
$$\psi(C) \subseteq \psi(\overline C)^\bot.$$
The quantum CSS code, see Theorem \ref{T:CSS}, of $\psi(C) \subseteq \psi(\overline C)^\bot$ has dimension 
\[\dim \big( \psi(\overline C)^\bot \big)-\dim \big( \psi(C) \big)=(2n-k)-k=2(n-k).\]
Moreover, the equivalence relation between the codes 
$C\sim \overline C$ and the result of Proposition \ref{P:symplectic-Euclidean}
imply that 
\[d\big( \psi((C^R)^{\bot_s}) \setminus  \psi(C) \big)=d\big( \psi((\overline{C}^R)^{\bot_s}) \setminus  \psi(\overline C) \big)
\]
Hence such CSS code has minimum distance 
$$d \ge \min \{d( (C^R)^{\bot_s}\setminus C)\} \ge d(C').$$
The last inequality is due to the fact that $\psi$ sends a vector of weight $t$ to a vector of weight ``at least'' $t$. 

In the case of $C=C^R$, the result follows immediately.
$\hfill \square$ \end{proof}  

It should be noted that the above distance inequality can be strict. For instance, when all minimum weight codewords of $ (C^R)^{\bot_s}\setminus C$ have at least one coordinate position containing $w^2$. In this case, we have 
\[
d > d\big( (C^R)^{\bot_s}\setminus C \big).\]
On the other hand, if there exists a minimum weight codeword of $ C_2^{\bot_s}\setminus C_1$ that only has coordinate values belonging to $\{0,1,w\}$, then the above minimum distance bound becomes equality, i.e., 
\[d= d\big( (C^R)^{\bot_s} \setminus  C \big).\]  

Furthermore, one can generalize the result of Theorem \ref{T:cyclicToGB} to cover more general additive codes (with more than one generator). 
In particular, a similar argument as the one in the previous proof implies that replacing $C$ with any symplectic self-orthogonal additive code (even with more than one generator) leads to the same result.   

Through the next example, we give an instance of the former case when the minimum distance improvement happens. 

\begin{example}\label{E:26-2-6}
Let $n=13$. Then applying the result of Theorem \ref{T:cyclicToGB} to the symplectic self-orthogonal code $C$ obtained from Theorem 6.4 of \cite{dastbasteh2025additive} 
(which is in correspondence to a non-CSS code $[[13,1,5]]$) implies a CSS code with parameters $[[26,2,6]]$,
which has the same rate but a larger minimum distance. 
Here the qubit connectivity is large than four.   
\end{example}

An immediate application of the above theorem is in converting currently known families of additive cyclic codes, i.e., the quantum non-CSS codes, into families of GB codes that are CSS codes ``at no extra cost''. 
For instance, several families and examples of additive cyclic codes and quantum non-CSS codes over $\mathbb{F}_4$ were discussed in \cite[Chapter VI]{kovalev2011design} and \cite[Chapter 6]{dastbasteh2025additive}, which can naturally be transformed into CSS codes using the above tool.
Another example is based on the family of odd length XZZX cyclic codes \cite{kovalev2011design,kovalev2012}, that we briefly discuss below.

\begin{example}\label{E:GB family}
Let $d$ be a positive odd integer and $n=\frac{d^2+1}{2}$. In \cite[Example 11]{kovalev2011design}, it was shown that the palindromic additive cyclic code generated by 
\[g(x)=w+x+x^d+wx^{d+1}\]
which is in correspondence to the Pauli operator 
\[ZXI^{\otimes^{d-2}}XZI^{\otimes^{n-d-2}}\]
generates a non-CSS code with parameters $[[\frac{d^2+1}{2},1,d]]$. 
This code has a minimum codeword consisting of alphabets $\{0,1,w\}$.
Then the result of Theorem \ref{T:cyclicToGB} implies the existence of a GB family with parameters $[[{d^2+1},2,d]]$, with the corresponding polynomials $a(x)=x+x^{d}$ and $b(x)=1+x^{d+1}$. Finally, as it is been stated in \cite[Statement 11]{wang2022}, no GB code with generator polynomial of weight four and parameters $[[n,k,d]]$ can have $d>\sqrt{n-1}$. 
Hence the above family is an optimal class of GB codes in this sense.   
\end{example}

The existence of the above family is also established in \cite{arnault2025generalization} through a graph-theoretical approach. 
In Section \ref{S:oddfamily}, we revisit this family, providing an algebraic perspective on their existence, the structure of a few logical operators, and a syndrome extraction procedure that is resilient to hook errors.

Next we discuss a natural minimum distance upper bound for GB codes.

\begin{lemma}\label{L:upper}
Let $a(x)$ and $b(x)=m(x)a(x)$ be two polynomials of degree less than $n$. If the quantum GB code of length $n$ corresponding to polynomials of $a(x)$ and $b(x)$ has dimension larger than zero, then  it has minimum distance of at most $\wt (m(x))+1$.
\end{lemma}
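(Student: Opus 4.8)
The plan is to write down an explicit $Z$-type logical operator of weight $\wt(m(x))+1$ and verify it is nontrivial. Recall that the GB code attached to $(a(x),b(x))$ is the CSS code (Theorem \ref{T:CSS}) with $X$-check matrix $H_1=[G_{a(x)}|G_{b(x)}]$ and $Z$-check matrix $H_2=[G_{b(x^{-1})}|G_{a(x^{-1})}]$. Under the polynomial identification, a vector $(u(x)|v(x))\in\F_2^{2n}$ lies in $\ker H_1$ exactly when $a(x)u(x)+b(x)v(x)\equiv 0\pmod{x^n-1}$, and such a vector represents a $Z$-type Pauli operator commuting with every $X$-stabilizer, of Pauli weight $\wt(u)+\wt(v)$. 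It is a genuine (nontrivial) logical operator precisely when it does not lie in the $Z$-stabilizer group, which by the correspondence $\psi(C_2)$ recalled in the previous section is $\{(c(x)b(x^{-1})\,|\,c(x)a(x^{-1})):c(x)\in\F_2[x]/\langle x^n-1\rangle\}$.

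The candidate I would use is $(u(x)|v(x))=(m(x)\,|\,1)$, noting that $\deg m=\deg b-\deg a<n$ so this is a well-defined length-$2n$ vector. Substituting $b(x)=m(x)a(x)$ gives $a(x)m(x)+b(x)\cdot 1=am+ma=0$ identically over $\F_2$, so the vector indeed lies in $\ker H_1$, and its weight is $\wt(m)+1$. The bulk of the argument is then to show that $(m\,|\,1)$ is not a $Z$-stabilizer. Comparing second blocks, membership in $\psi(C_2)$ would force $c(x)a(x^{-1})\equiv 1\pmod{x^n-1}$ for some $c(x)$, i.e.\ $a(x^{-1})$ would be a unit of $\F_2[x]/\langle x^n-1\rangle$.

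To rule this out I would invoke the hypothesis that the code has positive dimension. By the rank computation recalled in the preliminaries, $k=\deg\gcd(a(x),b(x),x^n-1)$, and since $b=ma$ this simplifies to $k=\deg\gcd(a(x),x^n-1)$; positive dimension thus means $a(x)$ shares a nontrivial factor with $x^n-1$ and is a non-unit. Because $x\mapsto x^{-1}$ is a ring automorphism of $\F_2[x]/\langle x^n-1\rangle$, the reciprocal $a(x^{-1})$ is a non-unit as well, so no $c(x)$ with $c(x)a(x^{-1})\equiv 1$ exists and $(m\,|\,1)$ cannot be a $Z$-stabilizer. Hence it is a nontrivial logical operator and $d\le\wt(m)+1$. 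I expect the only delicate point to be this nontriviality step: correctly reducing $\gcd(a,b,x^n-1)$ using $b=ma$ and tracking the circulant/reciprocal conventions so that the $Z$-stabilizer module is described as above. Once the operator is seen to be non-stabilizer, the weight count is immediate.
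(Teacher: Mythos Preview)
Your proof is correct and follows essentially the same approach as the paper: exhibit an explicit weight-$(\wt(m)+1)$ logical operator and use the positive-dimension hypothesis to verify nontriviality. The only cosmetic difference is that the paper works on the $X$-side with the vector $(1,m(x))\in C_1\setminus C_2$, whereas you work on the dual $Z$-side with $(m(x),1)\in C_2^{\bot}\setminus C_1^{\bot}$; in both cases the nontriviality step boils down to the observation that $a(x)$ (equivalently $a(x^{-1})$) is a non-unit of $\F_2[x]/\langle x^n-1\rangle$ whenever $k=\deg\gcd(a,x^n-1)>0$.
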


\begin{proof}
The corresponding GB code
has the parity check matrix
\[H=\begin{bmatrix}H_x & 0\\
0 & H_z
\end{bmatrix},\]
where 
$H_x=[G_{a(x)},G_{b(x)}]$ and $H_z=[G_{b(x^{-1})},G_{a(x^{-1})}]$.
This quantum code is the CSS code of $C_2 \subseteq C_1$, where 
$H_x$ and $H_z$ are parity check matrices of $C_2^\bot$ and $C_1$, respectively. 
Let $w=(u,v)$, where $u=(1,0,\ldots,0)$ and $v$ is the vector representation of $m(x)$. 
Then $w\in C_1$ because 
\[
\begin{split}
&H_zw^T=b(x^{-1})u^T+a(x^{-1})v^T=\\&b(x^{-1})+a(x^{-1})m(x^{-1})=b(x^{-1})+b(x^{-1})=0.
\end{split}
\]
Since each $(u',v') \in C_2$  satisfies $\wt(u')>1$, as otherwise the GB code has dimension zero, we conclude that $w \not\in C_2$ and it has weight $\wt(m(x))+1$. Thus $d(C_1 \setminus C_2) \le \wt(m(x))+1$.    
$\hfill \square$ \end{proof}

We consider two binary codes as {\em permutation equivalent} if there exists a permutation of coordinates that maps one code to another. 
Some equivalence criteria for GB codes is discussed in Statement 3 of \cite{wang2022}. 
The following proposition gives a new condition for GB codes to have the same parameters.

\begin{proposition}\label{P:equivalence}
Let $a(x)$ and $b(x)$ be two binary polynomials of $\F_2[x]/\langle x^n-1 \rangle$. 
Then the GB codes corresponding to $a(x)$ and $b(x)$, and $a'(x)=x^ia(x)$ and $b'(x)=x^jb(x)$ both have the same parameters (length, dimension, and minimum distance) for each $0\le i,j\le n-1$. 
\end{proposition}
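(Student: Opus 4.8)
The plan is to prove something slightly stronger than equality of parameters: that the two GB codes are related by a permutation of the $2n$ physical qubits, i.e. by a Clifford permutation unitary $U_\pi$. Conjugation by $U_\pi$ carries one stabilizer group onto the other, sends the normalizer to the normalizer, and preserves Pauli weight, so it automatically preserves the block length $2n$, the dimension $2k$, and the minimum distance $d$. It therefore suffices to exhibit one explicit qubit permutation intertwining the two stabilizer groups; the separate verification of the dimension is then subsumed, although it can also be seen directly from the fact that $x$ is a unit in $\F_2[x]/\langle x^n-1\rangle$, so that $\gcd(a(x),b(x),x^n-1)=\gcd(x^ia(x),x^jb(x),x^n-1)$.

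I would set up the bookkeeping by viewing the $2n$ qubits as two blocks $B_1,B_2$ of size $n$, matching the two circulant columns of $H_1=[G_{a(x)}|G_{b(x)}]$ and $H_2=[G_{b(x^{-1})}|G_{a(x^{-1})}]$. In polynomial form the $X$-stabilizer row space is $R_x=\{(p(x)a(x),p(x)b(x))\}$ and the $Z$-stabilizer row space is $R_z=\{(p(x)b(x^{-1}),p(x)a(x^{-1}))\}$, with $p(x)$ ranging over $\F_2[x]/\langle x^n-1\rangle$. The candidate permutation $\pi$ is the cyclic shift of block $B_2$ by $m:=(j-i)\bmod n$ positions, acting as the identity on $B_1$; in the symplectic picture this applies the same shift to the $X$- and $Z$-components restricted to $B_2$, so it is a genuine qubit permutation.

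The core of the argument is then to check the two row-space identities $\pi(R_x)=R_x'$ and $\pi(R_z)=R_z'$, where the target code with $a'=x^ia$, $b'=x^jb$ has $R_x'=\{(px^ia,px^jb)\}$ and, using $b'(x^{-1})=x^{-j}b(x^{-1})$ and $a'(x^{-1})=x^{-i}a(x^{-1})$, has $R_z'=\{(px^{-j}b(x^{-1}),px^{-i}a(x^{-1}))\}$. Applying $\pi$ to a generic element of $R_x$ produces $(pa,x^{m}pb)$; the reparametrization $q=x^{-i}p$ rewrites this as $(qx^ia,qx^jb)\in R_x'$, and since $x$ is a unit this reparametrization is a bijection, giving $\pi(R_x)=R_x'$. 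Similarly $\pi$ sends $(pb(x^{-1}),pa(x^{-1}))$ to $(pb(x^{-1}),x^{m}pa(x^{-1}))$, and the substitution $q=x^jp$ identifies it with $(qx^{-j}b(x^{-1}),qx^{-i}a(x^{-1}))\in R_z'$, giving $\pi(R_z)=R_z'$. As $\pi$ is a coordinate permutation it preserves Hamming (hence Pauli) weight, and the two identities show it carries the entire stabilizer group onto the target one, so the parameters agree.

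The step I expect to be delicate is the $Z$-block bookkeeping: the polynomials in $H_2$ are the reciprocals $a(x^{-1}),b(x^{-1})$, so the shift direction effectively reverses there, and one must be careful that it is $B_2$ (carrying $a$ in $H_1$ but $a(x^{-1})$ in $H_2$) that should be shifted by $j-i$, rather than shifting $B_1$ or using $i-j$. The verification must be carried out at the level of row spaces, not generator by generator: the extra factor $x^{m}$ introduced by the shift is absorbed by reindexing the free parameter $p(x)\mapsto x^{\pm\bullet}p(x)$, which is legitimate precisely because $x$ is invertible modulo $x^n-1$. A useful consistency check is the case $i=j$ (so $m=0$), where $\pi$ is the identity and the two codes coincide, matching the fact that $\langle x^i(a(x)+wb(x))\rangle=\langle a(x)+wb(x)\rangle$.
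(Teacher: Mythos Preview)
Your proof is correct and follows essentially the same idea as the paper's: exhibit a coordinate permutation built from independent cyclic shifts on the two blocks of $n$ qubits that carries one GB code to the other. Your permutation (shift only $B_2$ by $j-i$) differs from the paper's shifts only by a global cyclic shift, which is an automorphism of every GB code, so the two arguments are equivalent. One point worth noting: the paper presents \emph{two} permutations, one taking $C_2\to C_2'$ (shift by $i$ and $j$) and a different one taking $C_1\to C_1'$ (shift by $-j$ and $-i$), and then asserts that the three listed conditions give equality of $d(C_1\setminus C_2)$ and $d(C_1'\setminus C_2')$; that inference is not valid in general for two unrelated permutation equivalences, but it is valid here because each of the two permutations in fact maps \emph{both} $C_1$ and $C_2$ to their primed counterparts. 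Your single-permutation framing makes this transparent and is the cleaner way to package the argument.
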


\begin{proof}
These GB codes are the CSS codes of $C_2 \subseteq C_1$ and $C_2' \subseteq C_1'$, where 
$C_2$ and $C_2'$ have the generator matrices
$H_x=[G_{a(x)},G_{b(x)}]$ and $H_x'=[G_{a'(x)},G_{b'(x)}]$,  
and $C_1$ and $C_1'$ have the check matrices $H_z=[G_{b(x^{-1})},G_{a(x^{-1})}]$ and $H_z'=[G_{b'(x^{-1})},G_{a'(x^{-1})}]$, respectively. 

Applying $i$ cyclic shifts to the first, and $j$ shifts to the second $n$ components of $C_2$ (that is a permutation action) sends the matrix $H_x$ to $H_x'$. So $C_2$ and $C_2'$ are equivalent, and hence have the same parameters and weight distribution. Similarly, applying $n-j$ shifts to the first and $n-i$ shifts to the second $n$ components of $C_1$ maps it to $C_1'$. Thus we have 
\begin{itemize}
\item  $C_1$ and $C_1'$ are permutation equivalent, 
\item  $C_2$ and $C_2'$ are permutation equivalent,
\item  $C_2 \subseteq C_1$ and $C_2' \subseteq C_1'$.
\end{itemize}
The first two conditions imply that the corresponding GB codes have the same length and dimension.
Moreover, these three conditions imply that $C_1 \setminus C_2$ and $C_1' \setminus C_2'$ (respectively $C_2^\bot \setminus C_1^\bot$ and $C_2'^\bot \setminus C_1'^\bot$) have the same minimum weight vectors. 
$\hfill \square$ \end{proof} 

\begin{example}\label{E:XZZX-GB}
Consider the $[[d^2+1,2,d]]$ GB code of Example \ref{E:GB family} with corresponding $a(x)=x+x^{d-1}$ and $b(x)=1+x^{d+1}$, with $n=\frac{d^2+1}{2}$. 
Note that since $d$ is odd, we have $\gcd(2,n)=1$. Moreover, 
\[\frac{(d-1)(d+1)}{2}+\frac{d^2+1}{2}\equiv 1 \pmod n.\]
Hence $\gcd(d+1,n)=\gcd(d-1,n)=1$ and by \cite[Statement 3 (i)]{wang2022}, we have an equivalent GB code with polynomials $a(x^{\frac{d+1}{2}})=x^{\frac{d-1}{2}}(1+x)$ and $b(x^{\frac{d+1}{2}})=1+x^d$.
Hence Proposition \ref{P:equivalence} implies that $a'(x)=1+x$ and $b'(x)=1+x^d$ form an equivalent GB code. 

Moreover, note that $b'(x)=(1+x+\ldots+x^d-1)a'(x)$. Hence the Minimum distance upper bound of Lemma \ref{L:upper} gives the sharp upper bound of $d$, which is the actual minimum distance.
\end{example}

\section{New minimum distance bounds for GB codes}\label{S:dis.bound}

One of the key steps in constructing large families of GB codes is the development of tools for accurately computing their minimum distance. 
In particular, we are interested in degenerate GB codes with low connectivity between data and check qubits. 
The scarcity of effective minimum distance bounds for such codes may explain the limited number of known infinite families of degenerate GB codes. 
Motivated by this, we develop new minimum distance bounds based on the algebraic structure of GB codes, which can be used to construct novel families of degenerate GB codes.  

In the rest of this section, we present new minimum distance bounds for the GB code family. 
These bounds facilitate the exact calculation of the minimum distance for certain instances, significantly reduce the overall computation time, and support the construction of new GB code families. 
The following setup is required to present our bound.

Let $n$ be a positive integer and $f(x) \mid x^n-1$. Let also $p(x) \in \F_2[x]/\langle x^n-1\rangle$ such that $\gcd(p(x),x^n-1)$, i.e., $p(x)$ is a unit element of the ring. 
We restrict our discussion to the GB code constructed from the polynomials $a(x)=f(x)$ and $b(x)=p(x)f(x)$. 
Then this GB code has parameters $[[2n,2k]]$, where $k=\deg(f(x))$. 
Note that $f(x)$ is the generator polynomial of a length $n$ binary cyclic code and its Euclidean dual has generator polynomial $g(x)$, where $g(x)$ is the reciprocal polynomial of 
$h(x)=\frac{x^n-1}{f(x)}$ defined by $g(x)=x^{\deg(h(x))}h(x^{-1})$. 

This GB code is the CSS code of $C_2 \subseteq C_1$, where $C_2=\langle (a(x),b(x)) \rangle$ and $C_1=C_2+\Span\{(c,c')\}$, and $c,c' \in \langle g(x) \rangle$. 
One can also define $C_1$ as $C_1=\langle (b(x^{-1}),a(x^{-1}))\rangle^{\bot}$. Using the latter, each element of $C_1$ can be characterized as $(u(x),v(x))$ such that 
\begin{equation}\label{E:membershipgen}
\begin{split}
&u(x)b(x)+v(x)a(x)=u(x)f(x)p(x)+v(x)f(x) \\&\equiv f(x)(u(x)p(x)+v(x)) \equiv 0 \pmod{x^n-1}.
\end{split}
\end{equation}
This implies one of the following two cases
\[\textbf{(a)}\ \begin{split}
 u(x)p(x)+v(x) \equiv 0 \pmod{x^n-1} 
 \end{split}
 \]
or 
\[\textbf{(b)}\
\begin{split}
& u(x)p(x)+v(x) \not\equiv 0 \pmod{x^n-1} \\ &
\text{and} \ h(x)\mid u(x)p(x)+v(x).
\end{split}
\]
First we consider case (b). 
In this case $u(x)p(x)+v(x)$ is a non-zero element of the cyclic code generated by $h(x)$, which will be called $C_{h(x)}$. 
A similar calculation after multiplying sides of (\ref{E:membershipgen}) by $p(x)^{-1}$ shows that  $u(x)+p(x)^{-1}v(x) \in C_{h(x)}$.  Let $d=d(C_{h(x)})$. Then 
\begin{equation}\label{E:dgb1}
\begin{split}
d &\le \wt (u(x)p(x)+v(x))\\&\le \wt(p(x))\wt(u(x))+\wt(v(x)).
\end{split}
\end{equation}
and 
\begin{equation}\label{E:dgb2}
\begin{split}
d &\le \wt (u(x)+p(x)^{-1}v(x))\\&\le \wt(u(x))+\wt(p(x)^{-1})\wt(v(x)).
\end{split}
\end{equation}
An immediate application of this is that if $m=\max \{\wt(p(x)),\wt(p(x)^{-1})\}$, then we get 
\[
\begin{split}
&(\wt(p(x))+\wt(p(x)^{-1}))d \le\\& (m+\wt(p(x))\wt(p(x)^{-1}))\wt( (u(x),v(x))).
\end{split}
\]
Therefore, this GB  code has minimum distance $d_{GB}$, which satisfies
\begin{equation}\label{E:lowerbound(b)}
\frac{(\wt(p(x))+\wt(p(x)^{-1}))d}{(m+\wt(p(x))\wt(p(x)^{-1}))}\le d_{GB}.
\end{equation}

Now we consider case (a). First, let 
\begin{equation}\label{E:codeword}
r(x)+p(x)s(x)\equiv h(x) \pmod{x^n-1},
\end{equation}
for some $r(x)$ and $s(x)\in \F_2[x]$.
Then for each $u(x)$ such that $f\nmid u(x)$ we have 
\[
u(x)r(x)+u(x)p(x)s(x)\equiv u(x)h(x) \pmod{x^n-1},
\]
which is a nonzero codeword of $C_{h(x)}$. Therefore, 
\begin{equation}\label{E:dgb3}
\begin{split}
d &\le \wt (u(x)r(x)+u(x)p(x)s(x)\\&\le \wt(r(x))\wt(u(x))+\wt(s(x))\wt(u(x)p(x)).
\end{split}
\end{equation}
Hence if $m=\max\{\wt(r(x)),\wt(s(x))\}$, then 
\[
\frac{d}{m}\le d_{GB}.
\]
A similar argument using the polynomial $p(x)^{-1}$ results in
a new equation
\[
r'(x)+p(x)^{-1}s'(x)\equiv h(x) \pmod{x^n-1},
\]
and
if $m'=\max\{\wt(r'(x)),\wt(s'(x))\}$, then 
\[
\max\{\frac{d}{m},\frac{d}{m'}\}\le d_{GB}.
\]
Moreover, one can easily verify, for example using Lemma \ref{L:upper}, that $(1,p(x))$ and $(p(x)^{-1},1) \in C_1\setminus C_2$ and both satisfy $(a)$.
Hence one gets
\begin{equation}\label{E:dgb4}
\frac{d}{m'}\le d_{GB} \le \min\{\wt(p(x)), \wt(p(x)^{-1})\}+1.
\end{equation}
Note also that if $f\nmid s(x)$ in (\ref{E:codeword}), then $(s(x),p(x)s(x)+h(x))=(s(x),r(x)) \in C_1\setminus C_2$ and satisfies case (b). 
Therefore, we also have
\begin{equation}\label{E:dgb5}
d_{GB} \le \wt(s(x))+\wt(r(x)).
\end{equation}

We now formally state a particular application of the above discussion.

\begin{theorem}\label{T:generaldistancGB}
Let $n$ be a positive integer, $f(x) \mid x^n-1$, and $p(x)\in \F_2[x]$ such that $\gcd(p(x),x^n-1)=1$. Let also 
\[
r(x)+p(x)s(x)\equiv \frac{x^n-1}{f(x)} \pmod{x^n-1}
\]
and 
\[
r'(x)+p(x)^{-1}s'(x)\equiv \frac{x^n-1}{f(x)} \pmod{x^n-1}
\]
for some $r(x),s(x), r'(x),s'(x) \in \F_2[x]$ and $d$ be the minimum distance of the length $n$ binary cyclic code generated by $\frac{x^n-1}{f(x)}$. Then the GB code corresponding to $(f(x),p(x)f(x))$ has parameters $[[2n,2\deg(f(x)),d_{GB}]]$ where 
\[
\begin{split}
&\min\{\frac{(\wt(p(x))+\wt(p(x)^{-1}))d}{(m+\wt(p(x))\wt(p(x)^{-1}))},\frac{d}{m'}\} \le d_{GB} \le \\& \min\{\wt(p(x))+1,\wt(p(x)^{{-1}})+1 ,\\&\wt(s(x))+\wt(t(x))\},
\end{split}
\]
where $m=\max\{\wt(p(x)),\wt(p(x)^{-1})\}$ and 
\[
\begin{split}
m' \in \{&\max\{\wt(r(x)),\wt(s(x))\},\\&\max\{\wt(r'(x)),\wt(s'(x))\} \}.
\end{split}
\]
Moreover, there exist non-trivial logical operators of weights 
$\wt(p(x))+1$ and $\wt(s(x))+\wt(t(x)) \in C_1\setminus C_2$. 
\end{theorem}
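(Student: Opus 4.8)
The plan is to assemble into one statement the chain of inequalities developed in the discussion preceding the theorem, and to certify the upper bound by explicit logical representatives. First I would settle the dimension: since $f(x)\mid x^n-1$ and $f(x)\mid p(x)f(x)$, we have $\gcd(f(x),p(x)f(x),x^n-1)=f(x)$, so the GB dimension formula $k=\deg(\gcd(a,b,x^n-1))$ gives $\deg(f(x))$ and hence $2\deg(f(x))$ logical qubits. For the distance I would recall that the code is the CSS code of $C_2\subseteq C_1$ with $C_2=\langle(f,pf)\rangle$ and $C_1=\langle(b(x^{-1}),a(x^{-1}))\rangle^{\bot}$, and invoke the $X$--$Z$ symmetry of GB codes established around Theorem~\ref{T:cyclicToGB} (i.e.\ $\psi(C_1)\sim\psi(C_2)$), so that bounding the minimum weight of $C_1\setminus C_2$ controls the full CSS distance $d_{GB}$.

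For the lower bound the engine is the membership characterization~(\ref{E:membershipgen}): any $(u,v)\in C_1$ satisfies $f(x)\big(u(x)p(x)+v(x)\big)\equiv 0$, forcing either \textbf{(a)} $up+v\equiv 0$, or \textbf{(b)} $up+v$ is a nonzero multiple of $h(x)=\tfrac{x^n-1}{f(x)}$, hence a nonzero word of the cyclic code $C_{h(x)}$ of distance $d$. In case (b) I would read both $up+v$ and $u+p^{-1}v$ as words of $C_{h(x)}$ to obtain~(\ref{E:dgb1}) and~(\ref{E:dgb2}); the one non-routine step is their combination. Taking $\wt(p^{-1})$ times~(\ref{E:dgb1}) plus $\wt(p)$ times~(\ref{E:dgb2}) produces coefficients $\wt(p)\wt(p^{-1})+\wt(p)$ and $\wt(p)\wt(p^{-1})+\wt(p^{-1})$ on $\wt(u)$ and $\wt(v)$, each at most $m+\wt(p)\wt(p^{-1})$ with $m=\max\{\wt(p),\wt(p^{-1})\}$, which is exactly the bound~(\ref{E:lowerbound(b)}). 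In case (a) one has $v\equiv up$; using~(\ref{E:codeword}) and its $p^{-1}$-analogue, for any $u$ with $f\nmid u$ the word $ur+ups\equiv uh$ lies in $C_{h(x)}$ and has weight at most $m'\,\wt((u,v))$, yielding $d/m'\le\wt((u,v))$ with $m'$ as in the statement. Since every non-trivial logical representative falls into case (a) or (b), its weight is at least the minimum of the two case bounds, which is the asserted lower bound.

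For the upper bound and the closing ``moreover'' I would exhibit three explicit elements of $C_1\setminus C_2$. The case-(a) words $(1,p(x))$ and $(p(x)^{-1},1)$ lie in $C_1$ because $1\cdot p+p\equiv0$ and $p^{-1}p+1\equiv0$; they lie outside $C_2=\{(tf,tpf)\}$ since their first coordinates $1$ and $p^{-1}$ are not divisible by $f$ (here $\deg f\ge1$, as the code is nontrivial), and this is precisely the witness already produced by Lemma~\ref{L:upper} with $m(x)=p(x)$. Their weights $\wt(p)+1$ and $\wt(p^{-1})+1$ supply two of the upper-bound terms. When $f\nmid s(x)$, equation~(\ref{E:codeword}) gives the case-(b) word $(s(x),r(x))\in C_1\setminus C_2$ of weight $\wt(s)+\wt(r)$, as in~(\ref{E:dgb5}), completing the minimum in the upper bound.

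The main obstacle I anticipate is organizational rather than a single deep step. One must keep straight the two roles of ``$m$'' (the $\wt(p)$-version governing case (b) versus the $\max\{\wt(r),\wt(s)\}$-version governing $m'$ in case (a)), verify that the case split over~(\ref{E:membershipgen}) is genuinely exhaustive so the lower bound is a true minimum over all of $C_1\setminus C_2$, and---most easily overlooked---check that the exhibited words are not already in $C_2$, which reduces to a non-divisibility condition on their first coordinate. The reduction to $C_1\setminus C_2$ alone (rather than also analyzing $C_2^{\bot}\setminus C_1^{\bot}$) is legitimate only because of the GB $X$--$Z$ equivalence, so I would state that dependence explicitly.
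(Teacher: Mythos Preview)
Your proposal is correct and is essentially the paper's own argument: the theorem is stated there as a ``particular application of the above discussion,'' and you have faithfully assembled the case split over~(\ref{E:membershipgen}), the combination of~(\ref{E:dgb1})--(\ref{E:dgb2}) into~(\ref{E:lowerbound(b)}), the case-(a) bound~(\ref{E:dgb3}), and the explicit logical witnesses from Lemma~\ref{L:upper} and~(\ref{E:dgb5}). Your explicit weighting ($\wt(p^{-1})$ times~(\ref{E:dgb1}) plus $\wt(p)$ times~(\ref{E:dgb2})) and your remark that the $X$--$Z$ equivalence is what licenses bounding only $C_1\setminus C_2$ are in fact a bit more careful than the paper's presentation.
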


It is important to emphasize that the above result is merely a special case of the distance lower and upper bounds given in (\ref{E:dgb1})--(\ref{E:dgb5}); in fact, tighter bounds may be obtained using these general equations.
Moreover, our discussion of the minimum distance does not rely entirely on the existence of $p(x)^{-1}$ and remains valid even when the condition $\gcd(p(x),x^n-1)$ is relaxed.
The following examples illustrate applications of these observations and demonstrate how the above theorem can be used.

\begin{example}
 Let $n=48$,  $f(x)=1+x+x^2$ (where $f(x)\mid x^n-1$), and $p(x)=1+x^3+x^6+\cdots+x^{18}$.
Note that $p(x)$ is invertible modulo $x^n-1$ with the inverse 
 \[
 \begin{split}
p(x)^{-1}&=x^3+x^6+x^{12}+x^{18}+x^{24}\\&+x^{27}+x^{33}+x^{39}+x^{45}.
 \end{split}
 \]
 Also, the cyclic code generated by $h(x)=\frac{x^n-1}{f(x)}$ has distance $\frac{2n}{3}=32$ and we have 
 \[
 \begin{split}
h(x)&=p(x)(x^6+x^7+x^{27}+x^{28})\\&+(1+x+x^3+x^4).
 \end{split}
 \]
 Applying the result of Theorem \ref{T:generaldistancGB} implies that the pair of polynomials $(f(x),f(x)p(x))$ forms a GB code with parameters $[[96,4,d_{GB}]]$, where 
 \[
7<\min \{\frac{(9+7)32}{9+63}, \frac{32}{4} \}  \le d_{GB} \le \min \{8,10\},
 \]
 which implies that $d_{GB}=8$.
\end{example}

\begin{example}
Let $n=27$, $f(x)=1+x+x^2$, and $p(x)=1+x^3+x^6\cdots+x^{18}$. Next we apply Theorem \ref{T:generaldistancGB} and show that the GB code corresponding to the polynomials $f(x)$ and $f(x)P(x)$ has parameters $[[54,4,6]]$. First, since $\deg(f(x))=2$ the claim about dimension follows immediately. 

Note that the cyclic code generate by $h(x)=\frac{x^n-1}{f(x)}$ has distance $\frac{2n}{3}=18$. Also one can verify that 
\[
h(x)=p(x)(x^6 + x^7)+(1+x+x^3+x^4).
\]
and 
\[
p(x)^{-1}=x^3+x^6+x^{12}+x^{18}+x^{24}.
\]
Now applying the minimum distance bound of Theorem \ref{T:generaldistancGB} implies that 
\[
\min\{\frac{12\times 18}{42}\approx 5.14 , \frac{18}{4}\}\le d_{GB}\le \min\{8,6\} .
\]
This gives the bound $5\le d_{GB}\le 6$. Next, we use (\ref{E:dgb3}) to improve the lower bound $\frac{18}{4}$ given above. 
Note that as we mentioned earlier all the $(u(x),v(x)) \in C_1$ that are equivalent to an $X$-normalizer satisfy 
\[\textbf{(a)}\ \begin{split}
 u(x)p(x)+v(x) \equiv 0 \pmod{x^n-1} 
 \end{split}
 \]
or 
\[\textbf{(b)}\
\begin{split}
& u(x)p(x)+v(x) \not\equiv 0 \pmod{x^n-1} \\ &
\text{and} \ h(x)\mid u(x)p(x)+v(x).
\end{split}
\]
Moreover, all satisfying (b) have the property that $5<\wt(u(x),v(x))$ by the above lower bound. All $(u(x),v(x))$ satisfying (a)  have the property that $v(x)=u(x)p(x)$ and 
\[18 \le 2\wt(u(x)p(x))+4\wt(u(x)).\]
If $\wt(u(x)) \le 3$, then $\wt(v(x))\ge 3$ and thus $6\ge \wt((u(x),v(x)) )$.
If $\wt(u(x))=4$, then $\wt(v(x))\ge 2$ and again $6\ge \wt((u(x),v(x)))$. 
Finally, If $\wt(u(x))=5$, then $\wt(v(x))\ge 1$ (because $u(x)=0$ iff $u(x)p(x)=0$) and again $6\ge \wt((u(x),v(x)))$.

Therefore for all vectors satisfying either (a) or (b) we have $6\ge \wt((u(x),v(x)))$ which implies $6 \le d_{GB}$. Combining it with the previous bound, we obtain $d_{GB}=6$ and this is a $[[54,4,6]]$ GB code.

It should also be noted that $((1+x)f(x),(1+x)f(x)p(x))=(1+x^3,1+x^{21})$ is a weight four stabilizer and hence this GB code is a degenerate code. 
\end{example}

As we saw in the previous examples, our distance bound is capable of giving a lower bound that goes beyond the degeneracy of the code. 
There are not many such distance bounds for quantum LDPC codes, or specifically for GB codes, in the literature. 
To the best of our knowledge, this is the only bound capable of providing both upper and lower bounds on the minimum distance, or even determining the exact minimum distance, for GB codes with varying dimensions, and also surpassing classical methods by accounting for degeneracy.

In the next section, we apply the minimum distance bound discussed in this section in order to form two families of GB codes with dimension two, where each check qubit is connected to four physical qubits. 
We also use the distance bounds to reveal some interesting properties of such codes.

We conclude this section by noting that, as illustrated in the previous examples, although the main focus of our work is on GB codes with dimension two, the proposed minimum distance bound is also capable of identifying the true minimum distance in the case of GB codes with dimension greater than two. 
Hence a careful design of GB codes using this minimum distance may also result in other families of quantum codes with higher dimensions. 

\section{Existence, logical operators, and hook errors in $[[d^2+1,2,d]]$ GB code with odd $d$}\label{S:oddfamily}

In this section, we present a new algebraic proof for the minimum distance of the GB family $[[d^2+1,2,d]]$ for each odd integer $d$, using the technique developed in the previous section.
This sheds light on the structure of GB codes and opens up new perspectives for systematically constructing other families of GB codes.
As we mentioned earlier, this GB family was recently discussed in \cite{arnault202522gbcodesclassificationcomparison} using a graph-theoretical approach, and was previously considered in \cite{kovalev2012}.
However, our approach is different, and the following subsection reveals some new applications of our algebraic perspective.
 
First recall that a quantum code with stabilizer matrix $H$ is called $(w_c,w_r)$-regular, if each column and each row of $H$ have weights $w_c$ and $w_r$, respectively. 
In general, quantum codes with smaller $w_c$ and $w_r$ are desirable because that implies the application of less gates to extract the syndrome. In reality, those gates are noisy so having too much of them may introduce too many errors, significantly hindering the true performance of the code.

Recall also that, all the weight two polynomials of length $n$ belong to the length $n$ cyclic code $\langle (1-x)\rangle$. 

In general, GB codes constructed from two polynomials of weight two can have different dimensions. In this section, we only consider the case when the GB code corresponding to two polynomials $a(x)$ and $b(x)$ has dimension two or equivalently when $\langle a(x),b(x) \rangle =\langle (1-x)\rangle$. 
We define $P_n(x)=\frac{x^n-1}{x-1}=1+x+\cdots+x^{n-1}$ over $\F_2[x]$.

First we need some preliminary results. 

\begin{lemma}\label{L:Pd inverse}
Let $d\ge$ be a positive odd integer and $n=\frac{d^2+1}{2}$. 
\begin{enumerate}
    \item The inverse of the polynomial $P_d(x)$ modulo $x^n-1$ and modulo $P_n(x)$ is $xP_d(x^d)$.  In particular, the polynomial $P_d^{-1}(x)$, in both cases, has weight $d$.
    \item We have
    \[
\begin{split}
P_n&=P_d(x)(1+x^d+x^{2d}+\cdots+x^{d(\frac{d-1}{2}-1)})\\&+(x^{d\frac{d-1}{2}}+x^{d\frac{d-1}{2}+1}+\cdots+x^{n-1}).
\end{split}
\]
    \end{enumerate}
\end{lemma}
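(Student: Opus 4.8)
The plan is to prove both statements by direct computation over $\F_2$, exploiting the base-$d$ factorization of consecutive-power sums. For part (1), I would first record the product identity $P_d(x)P_d(x^d)=P_{d^2}(x)$: writing $P_d(x)P_d(x^d)=\sum_{i=0}^{d-1}\sum_{j=0}^{d-1}x^{i+jd}$, the map $(i,j)\mapsto i+jd$ is a bijection from $\{0,\ldots,d-1\}^2$ onto $\{0,1,\ldots,d^2-1\}$ (unique base-$d$ digits), so the double sum collapses to $\sum_{m=0}^{d^2-1}x^m=P_{d^2}(x)$ with no cancellation. Multiplying by $x$ gives $P_d(x)\cdot xP_d(x^d)=\sum_{k=1}^{d^2}x^k$, and I would reduce this modulo $x^n-1$ using $d^2=2n-1$: pairing the exponent $k$ with $k+n$ for $1\le k\le n-1$, each residue $1,\ldots,n-1$ occurs exactly twice while the class $0$ (from $k=n$) occurs once, so over $\F_2$ everything cancels except the constant term and $P_d(x)\cdot xP_d(x^d)\equiv 1\pmod{x^n-1}$. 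Thus $xP_d(x^d)$ inverts $P_d(x)$ modulo $x^n-1$; since $P_n(x)\mid x^n-1$, the same congruence holds modulo $P_n(x)$, settling that case at no extra cost.

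For the weight claim I would observe that $xP_d(x^d)=\sum_{j=0}^{d-1}x^{jd+1}$ has exactly $d$ terms, and that its reduction modulo $x^n-1$ keeps weight $d$ as long as the exponents $jd+1$ stay distinct modulo $n$. Here $\gcd(d,n)=1$: from $2n=d^2+1$, any common divisor of $d$ and $n$ divides $(d^2+1)-d^2=1$. Combined with $d<n$ (equivalently $(d-1)^2>0$), this makes $j\mapsto jd+1\pmod n$ injective on $\{0,\ldots,d-1\}$, so no terms collide and the weight is exactly $d$. I would make explicit that ``in both cases'' refers to this single weight-$d$ representative $xP_d(x^d)$ serving as the inverse in both rings.

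For part (2) I would check the displayed identity as an exact equality in $\F_2[x]$ by matching exponent ranges. The same base-$d$ bijection, now with $j$ restricted to $0\le j\le \frac{d-1}{2}-1$, yields $P_d(x)\big(1+x^d+\cdots+x^{d(\frac{d-1}{2}-1)}\big)=\sum_{m=0}^{\frac{d^2-d}{2}-1}x^m$, a contiguous block of powers with no repeats. The trailing term $x^{d\frac{d-1}{2}}+\cdots+x^{n-1}$ contributes exactly the powers from $\frac{d^2-d}{2}$ up to $n-1=\frac{d^2-1}{2}$; since the two blocks are contiguous and disjoint, their sum is $\sum_{m=0}^{n-1}x^m=P_n$, as claimed.

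The computations are otherwise routine, so the one point I expect to demand care is the weight assertion: the canonical residue of $xP_d(x^d)$ modulo $P_n(x)$ (degree $<n-1$) can actually have weight smaller than $d$ --- already $1+x^3$ for $d=3$ --- so the statement must be read as saying that the weight-$d$ polynomial $xP_d(x^d)$, equivalently its reduction modulo $x^n-1$, is a valid inverse in both settings, rather than that the fully reduced representative modulo $P_n(x)$ has weight $d$.
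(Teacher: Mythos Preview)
Your proof is correct and follows essentially the same route as the paper: both compute $P_d(x)P_d(x^d)=P_{d^2}(x)$ (the paper via the factorization $P_m(x)=\frac{1+x^m}{1+x}$, you via the base-$d$ bijection), multiply by $x$, and reduce modulo $x^n-1$ using $d^2=2n-1$. Your argument is in fact more careful than the paper's on the weight claim---the paper simply counts the $d$ terms of $xP_d(x^d)$ without checking that the exponents $1,d+1,\ldots,(d-1)d+1$ remain distinct modulo $n$, whereas you correctly supply $\gcd(d,n)=1$ to justify this---and your caveat about the canonical representative modulo $P_n(x)$ is a genuine subtlety the paper does not address.
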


\begin{proof}
First, we compute the product of $P_d(x)P_d(x^d)$ in $\F_2[x]$. Recall that $2n=d^2+1$. 
Then we have 
\begin{equation}\label{E:Pd}
\begin{split}  
P_d(x)P_d(x^d)&= \frac{1+x^d}{1+x}\frac{1+x^{d^2}}{1+x^d}=\frac{1+x^{d^2}}{1+x}\\&=1+x+\cdots+x^{2n-2}.
\end{split}  
\end{equation}
Thus 
\[
\begin{split}  
P_d(x)&(xP_d(x^d))\equiv(x+x^2+\cdots+x^{n-1})\\&+(1+x+x^2+\cdots+x^{n-1}) \equiv 1 \pmod{x^n-1}
\end{split}  
\]
and 
\[
\begin{split}  
P_d(x)&(xP_d(x^d))\equiv(x+x^2+\cdots+x^{n-1})\\&+x^n(1+x+x^2+\cdots+x^{n-1}) \\&\equiv 1 \pmod{P_n(x)}.
\end{split}  
\]
The last part of (1) follows by counting the number of terms in $xP_d(x^d)$, which is $d$.

The second part follows from a straightforward calculation.
$\hfill \square$ \end{proof} 

Now we give an algebraic proof for the existence of the GB family $[[d^2+1,2,d]]$ for each odd integer $d \ge 3$.

\begin{theorem}\label{T:QGB}
Let $d\ge 3$ be an odd integer. Then there exists a family of $[[d^2+1,2,d]]$ which is $(2,4)$-regular. 
Except for $d=3$, all such codes are degenerate to 4.
\end{theorem}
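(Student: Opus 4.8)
The plan is to instantiate the general machinery of Theorem~\ref{T:generaldistancGB} with the specific choice $f(x)=1+x$ and $p(x)=P_d(x)$, so that the resulting GB code is the one generated by the pair $(f(x),p(x)f(x))=(1+x,(1+x)P_d(x))=(1+x,1+x^d)$. This is exactly the equivalent representative identified in Examples~\ref{E:GB family} and~\ref{E:XZZX-GB}. Since $f(x)=1+x$ divides $x^n-1$ with $\deg(f)=1$, Theorem~\ref{T:generaldistancGB} immediately yields a $[[2n,2,d_{GB}]]=[[d^2+1,2,d_{GB}]]$ code, so only the value of $d_{GB}$, the regularity, and the degeneracy remain to be established.

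For the minimum distance I would feed the data of Lemma~\ref{L:Pd inverse} into Theorem~\ref{T:generaldistancGB}. Part~(1) gives $p(x)^{-1}=xP_d(x^d)$ with $\wt(p)=\wt(p^{-1})=d$, so that $m=d$; part~(2) is precisely a division identity $P_n=p(x)s(x)+r(x)$ with $h(x)=P_n=\frac{x^n-1}{f(x)}$, where $\wt(s)=\tfrac{d-1}{2}$ and $\wt(r)=\tfrac{d+1}{2}$, whence $m'=\tfrac{d+1}{2}$. The cyclic code generated by $h(x)=P_n(x)$ is the length-$n$ repetition code, so its distance is $D=n=\tfrac{d^2+1}{2}$. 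Substituting, both lower-bound expressions collapse to the same value
\[
\frac{(\wt(p)+\wt(p^{-1}))D}{m+\wt(p)\wt(p^{-1})}=\frac{D}{m'}=\frac{d^2+1}{d+1}=d-1+\frac{2}{d+1},
\]
while the upper bound reduces to $\min\{d+1,d+1,\wt(s)+\wt(r)\}=d$. Since $d-1+\tfrac{2}{d+1}>d-1$ forces the integer $d_{GB}$ to be at least $d$, the two bounds squeeze to $d_{GB}=d$. I expect this squeezing to be the delicate point: the lower bound only barely clears $d-1$, so the argument relies on the exact weight counts $\wt(s)=\tfrac{d-1}{2}$ and $\wt(r)=\tfrac{d+1}{2}$ of Lemma~\ref{L:Pd inverse} both to make the lower bound strictly exceed $d-1$ and to make the upper bound $\wt(s)+\wt(r)$ equal exactly $d$.

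The regularity is routine: the stabilizer matrix is block-diagonal with blocks $[G_{1+x}\,|\,G_{1+x^d}]$ and $[G_{1+x^{-d}}\,|\,G_{1+x^{-1}}]$, and a circulant of a weight-two polynomial has all rows and columns of weight two; hence every row of the stabilizer matrix has weight $2+2=4$ and every column has weight $2$, i.e.\ the code is $(2,4)$-regular. For degeneracy I would first describe the $X$-stabilizer module explicitly: since $1+x^d=(1+x)P_d(x)$, every element of $C_2$ has the form $(c_1,c_1P_d(x))$ with $c_1\in\langle 1+x\rangle$, the even-weight cyclic code, which is an ideal; because $P_d(x)$ is a unit (Lemma~\ref{L:Pd inverse}) the factor $c_1P_d(x)$ also lies in $\langle 1+x\rangle$ and is nonzero whenever $c_1\neq 0$. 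Both components therefore have even weight $\ge 2$, so every nonzero $X$-stabilizer has weight $\ge 4$, with the generator $(1+x,1+x^d)$ attaining $4$; the same bound holds for the $Z$-stabilizers by the equivalence $\psi(C_1)\sim\psi(C_2)$, and for a mixed stabilizer with symplectic representation $(\alpha\,|\,\beta)$ its Pauli weight is at least $\max\{\wt(\alpha),\wt(\beta)\}\ge 4$.

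Hence the minimum nonzero stabilizer weight is exactly $4$. Comparing with the distance, this weight-$4$ stabilizer lies strictly below $d$ precisely when $d\ge 5$, so the code is degenerate (to $4$) for every odd $d\ge 5$, whereas for $d=3$ all stabilizers have weight $4>3=d$ and the code is non-degenerate, giving the stated exception.
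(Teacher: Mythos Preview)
Your proof is correct and follows essentially the same route as the paper: instantiate Theorem~\ref{T:generaldistancGB} with $f(x)=1+x$, $p(x)=P_d(x)$, feed in the data of Lemma~\ref{L:Pd inverse}, and squeeze $d_{GB}$ between the resulting lower and upper bounds. If anything you are more explicit than the paper in two places---you spell out that the raw lower bound $\frac{d^2+1}{d+1}=d-1+\frac{2}{d+1}$ only forces $d_{GB}\ge d$ after invoking integrality, and you justify that the minimum nonzero stabilizer weight is exactly $4$ rather than merely exhibiting a weight-$4$ generator---both of which the paper glosses over.
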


\begin{proof}
Let $d\ge 3$ be a positive odd integer and $n=\frac{d^2+1}{2}$. Fix 
$f(x)=1+x$ and $f(x)P_d(x)$ as the corresponding polynomials of the GB code. Note that using Lemma \ref{L:Pd inverse} we have: 
\begin{equation}\label{E:minword1}
\begin{split}
P_n&=P_d(x)s(x)+r(x),
\end{split}
\end{equation}
where $s(x)=(1+x^d+x^{2d}+\cdots+x^{d(\frac{d-1}{2}-1)})$ and $r(x)=(x^{d\frac{d-1}{2}}+x^{d\frac{d-1}{2}+1}+\cdots+x^{n-1})$.
To compute the parameters of such GB code, we use the result of Theorem \ref{T:generaldistancGB}. In particular such GB code has parameters $[[2n,2,d_{GB}]]$, where 
$$ \min\{\frac{2n}{d+1},\frac{n}{d+1}\} \le d_{GB} \le \min \{d+1,d\}.$$
Here we used the fact that the cyclic code generated by $P_n(x)=\frac{x^n-1}{x-1}$ has distance $n$ (repetition code of length $n$). 
Hence we get
$$ d \le d_{GB} \le \min \{d+1,d\},$$
which implies that $d_{GB}=d$.
Moreover, this code has logical operators of weights $d$ and $d+1$.

As all the stabilizer
generators of such GB codes have weight four. Hence the
degeneracy claim follows immediately.
$\hfill \square$ \end{proof} 

Later, we prove that all such $[[d^2+1,2,d]]$ GB codes have girth $8$ except when $d=3$, where the girth is $6$.

\subsubsection{Logical Pauli and CNOT operators of $[[d^2+1,2,d]]$ GB codes} \label{S:logical}

Let $D$ be the $2 \times 2n$ matrix with the rows $v_1=[1,1,\ldots,1,0,0,\ldots,0]$ and $v_2=[0,0,\ldots,0,1,1,\ldots,1]$, then for each $u \in \F_2^2$, the encoded logical state corresponding to $u$ in such quantum codes, up to a normalizer constant, is defined by 
\begin{equation}\label{E:encoding}
\ket{u}_L=\sum_{c \in C_2}\ket{c+(uD)}.
\end{equation}
In particular, the four logical states are \\
$\ket{00}_L=\displaystyle\sum_{c \in C_2}\ket{c}$, $\ket{10}_L=\displaystyle\sum_{c \in C_2}\ket{c+v_1}$, $\ket{01}_L=\displaystyle\sum_{c \in C_2}\ket{c+v_2}$, and and $\ket{11}_L=\displaystyle\sum_{c \in C_2}\ket{c+v_1+v_2}$.
Thus the logical operators $XI$, $IX$, and $XX$ (respectively $ZI$, $IZ$, and $ZZ$) can be achieved by operating on $n$ or $2n$ qubits. 
However, by applying the discussion in the proof of Theorem \ref{T:QGB}, one can identify more optimal choices for such operations, requiring interaction with fewer qubits.

\begin{theorem}\label{T: Logicals}
In the $[[d^2+1,2,d]]$ GB code with an odd $d$, one can perform $X$-logical operators as follows.
\begin{itemize}
    \item $XI$ by performing $(u(x),v(x))$ (or any cyclic shift of it), 
    \item $XX$ by performing $(xv(x^d),u(x^d))$ (or any cyclic shift of it),
    \item $IX$ by performing $(1,P_d(x))$ or $(P_d(x)^{-1},1)$ (or any cyclic shift of them), 
\end{itemize}
where $$u(x)=1+x^d+x^{2d}+\cdots+x^{(\frac{d-1}{2}-1)d}$$ and $$v(x)=x^{n-\frac{d+1}{2}}+x^{n-\frac{d+1}{2}+1}+\cdots+x^{n-1}.$$
The first two operators require interaction with $d$ and the last one with $d+1$ data qubits, and all are optimal in this sense. 
\end{theorem}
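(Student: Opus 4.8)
The plan is to realize the code as the CSS pair $C_2\subseteq C_1$ associated with $a(x)=1+x$ and $b(x)=(1+x)P_d(x)$ exactly as in the proof of Theorem~\ref{T:QGB}, so that the $X$-logical operators are the nonzero cosets of $C_1/C_2$. For each of the three operators I would verify, in turn: membership in $C_1$ via the case~(a)/(b) description of Section~\ref{S:dis.bound}; its Hamming weight; the coset of $C_1/C_2$ it occupies; and the optimality of that weight inside its coset.

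Membership and weights are short computations. The pair $(u(x),v(x))$ lies in $C_1$ by case~(b), since Lemma~\ref{L:Pd inverse}(2) gives $u(x)P_d(x)+v(x)=P_n(x)=h(x)$; as its two blocks have disjoint supports its weight is $\wt(u)+\wt(v)=\tfrac{d-1}{2}+\tfrac{d+1}{2}=d$. The pairs $(1,P_d(x))$ and $(P_d(x)^{-1},1)$ satisfy case~(a) (their combination $uP_d+v$ vanishes), and by Lemma~\ref{L:Pd inverse}(1) each has weight $1+d=d+1$. For $(xv(x^d),u(x^d))$ I would reduce exponents using $\gcd(d,n)=1$ and $x^{d^2}\equiv x^{-1}\pmod{x^n-1}$, which keeps the two block weights $\tfrac{d+1}{2}$ and $\tfrac{d-1}{2}$ (total $d$); membership follows either by checking $xv(x^d)P_d(x)+u(x^d)\in\{0,P_n(x)\}$ directly or from the coset identity obtained next.

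To identify cosets I would use the two $\F_2$-linear functionals $\phi_1,\phi_2$ on $\F_2^{2n}$ returning the weight parity of the first and of the second block. Each vanishes on $C_2=\{(t(1+x),\,t(1+x)P_d(x))\}$ because $1+x$ vanishes at $x=1$, so $(\phi_1,\phi_2)$ descends to $C_1/C_2\cong\F_2^2$. Evaluating it on the three operators, and using that $\tfrac{d-1}{2}$ and $\tfrac{d+1}{2}$ have opposite parities while $\wt(P_d)=d$ is odd, shows they map to the three distinct nonzero elements of $\F_2^2$, with the image of $(xv(x^d),u(x^d))$ equal to the sum of the other two. Hence the three are nontrivial, pairwise inequivalent logical operators obeying $XX\equiv XI\cdot IX\pmod{C_2}$; choosing the dual logical-$Z$ basis accordingly fixes the labelling asserted in the statement.

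Optimality is the crux. For $XI$ and $XX$ the weight $d$ equals the minimum distance from Theorem~\ref{T:QGB}, and every nontrivial logical has weight at least $d$, so these are optimal. The real work is $IX$: I must show its coset has no representative of weight $\le d$. The key observation is that every representative has the special shape $(g(x),g(x)P_d(x))$ with $g(1)=1$, which forces \emph{both} $\wt(g)$ and $\wt(gP_d)$ to be odd (the second because $g(1)P_d(1)=1$). Feeding $e=g$ and the decomposition $P_n=P_d\,u+v$ into inequality~(\ref{E:dgb3}) gives $(d+1)\wt(g)+(d-1)\wt(gP_d)\ge 2n=d^2+1$. On its own this only prevents the weight sum from dropping much below $d$; the decisive move is to combine it with the parity fact. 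If $\wt(g)+\wt(gP_d)\le d-1$, then writing the left-hand side as $(d-1)(\wt(g)+\wt(gP_d))+2\wt(g)$ and bounding the first term by $(d-1)^2$ forces $\wt(g)\ge d$, contradicting $\wt(g)\le d-1$; this rules out $\wt(g)+\wt(gP_d)\le d-1$, so $\wt(g)+\wt(gP_d)\ge d$, and since this sum is even while $d$ is odd we conclude $\wt(g)+\wt(gP_d)\ge d+1$. This parity-sharpened use of the distance bound, rather than the bound itself, is the main obstacle and the step where the argument genuinely exceeds the generic estimates of Section~\ref{S:dis.bound}.
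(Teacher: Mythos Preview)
Your proof is correct and follows essentially the same route as the paper: distinguish the three nonzero cosets of $C_1/C_2$ via the parities $(\phi_1,\phi_2)$ of the two blocks, use the minimum distance $d$ for the optimality of $XI$ and $XX$, and combine $d$ with the evenness of any $IX$ representative's weight to force that coset up to $d+1$. Your detour through inequality~(\ref{E:dgb3}) to establish $\wt(g)+\wt(gP_d)\ge d$ for the $IX$ coset is unnecessary---this already follows from the minimum distance proved in Theorem~\ref{T:QGB}, and the paper accordingly dispatches the $IX$ lower bound in one line; the paper also obtains membership of $(xv(x^d),u(x^d))$ more cleanly by substituting $x\mapsto x^d$ in the identity $P_n=P_d\,u+v$ rather than by direct verification, but otherwise the arguments coincide.
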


\begin{proof}
As we showed in the proof of Theorem \ref{T:QGB}, for $u(x)=1+x^d+x^{2d}+\cdots+x^{(\frac{d-1}{2}-1)d}$, the vector $(u(x),v(x)=u(x)P_d(x)+P_n(x))$, which has weight $d$, is a logical operator. Hence it can be chosen as $XI$ logical operator. Moreover, (\ref{E:minword1}) implies 
\[
\begin{split}
P_n(x)&=P_d(x)u(x)+v(x)=\\&P_d(x)(1+x^d+x^{2d}+\cdots+x^{d(\frac{d-1}{2}-1)})\\&+(x^{d\frac{d-1}{2}}+x^{d\frac{d-1}{2}+1}+\cdots+x^{n-1}).
\end{split}
\]
Substituting $x$ with $x^d$ and multiplying both sides by $x$ (and computing modulo $x^n-1$) implies that
\[
\begin{split}
P_n(x)&=xP_d(x^d)u(x^d)+xv(x^d)\\&=P_d(x)^{-1}u(x^d)+xv(x^d).
\end{split}
\] 
Hence $(u(x^d)P_d(x)^{-1}+P_n(x),u(x^d))=(xv(x^d),u(x^d))$ is also a logical operator.
This operation has different parity in compare to the logical operation corresponding to $XI$ (their sum is not a stabilizer). 
So we choose it to be $XX$. 
Finally, note that $(u(x),v(x))+(xv(x^d),u(x^d))$ has an odd weight vector in the first and the second component. So the minimum weight of a logical $IX$ operator is at least $d+1$. 
Indeed, one can fix the operation $IX$ to be $(1,P_d(x))$ or $(P_d(x)^{-1},1)$.  
$\hfill \square$ 
\end{proof}

Note that so far we have only talked about logical $X$-operators above. 
Since each $X$-stabilizer (or $X$-normalizer) of the form $(a(x), b(x))$ is equivalent, via a one-to-one mapping, to $(b(x^{-1}), a(x^{-1}))$, which is a $Z$-stabilizer (or $Z$-normalizer), one can apply the result of Theorem \ref{T: Logicals} to obtain minimum-weight candidates for implementing the logical operators $ZI$, $IZ$, and $ZZ$.

Next we show that logical CNOT operator, i.e., the CNOT gate between the two logical qubits, can be done at ``zero cost'' by solely permuting (relabelling) the data qubits. Such permutation  technique previously studied in the literature \cite{grassl2013leveraging, sayginel2024fault} for general quantum codes and it is distinct from other fault-tolerant approaches such as transversal gates and lattice surgery.
Recall that in the $[[d^2+1,2,d]]$ GB family, each data qubit has a label from 
\[\mathbb{Z}_n \times \mathbb{Z}_n=\{0,1,\ldots,n-1\}\times \{0,1,\ldots,n-1\}.\]
So each vector in $\F_2^n \times \F_2^n$ can be expressed as $(a(x),b(x))\in \F_2[x]/\langle x^n-1 \rangle \times \F_2[x]/\langle x^n-1 \rangle$.
We will take advantage of the following permutations in the next theorem: 
\[
\begin{split}
S\big((a(x),b(x))\big)\rightarrow (b(x),a(x)),
\end{split}
\]
\[
\begin{split}
E_d\big((a(x),b(x))\big)\rightarrow (a(x^d),b(x^d)),
\end{split}
\]
and
\[
\begin{split}
\Pi_{1,0}
\big((a(x),b(x))\big)\rightarrow (xa(x),b(x)),
\end{split}
\]
where the first permutation swaps the first $n$ components with the second $n$ components, the second permutation
send the element in coordinate $i$ to $(d\times i) \mod n$, and the third one applies a cyclic shift to the first component.

\begin{theorem}
Let $d$ be a positive odd integer. In the GB family $[[d^2+1,2,d]]$, the logical operator CNOT  can be implemented by applying the permutation $\Pi_{1,0}E_dS$.     
\end{theorem}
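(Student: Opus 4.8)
The plan is to show that the qubit permutation $\sigma := \Pi_{1,0}E_dS$ is a symmetry of the code (it preserves the stabilizer group) whose conjugation action on the logical Pauli group agrees with that of the logical CNOT controlled on the first logical qubit, namely $XI\mapsto XX$, $IX\mapsto IX$, $ZI\mapsto ZI$, and $IZ\mapsto ZZ$. The first step is to compose the three elementary permutations into a single closed-form rule on support vectors: tracing $(a(x),b(x))$ through $S$, then $E_d$, then $\Pi_{1,0}$ yields
\[
\sigma\big((a(x),b(x))\big)=(x\,b(x^d),\,a(x^d)).
\]
I would also record the commutation identity $\sigma\circ T=T^{d}\circ\sigma$, where $T$ is the diagonal cyclic shift $(a,b)\mapsto(xa,xb)$; this follows directly from the closed form and lets me upgrade statements about cyclic generators to the full cyclic modules.

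Next I would verify the $X$-sector, which is immediate from Theorem \ref{T: Logicals} and Lemma \ref{L:Pd inverse}. Applying $\sigma$ to the representative $XI=(u(x),v(x))$ gives $(x\,v(x^d),u(x^d))$, which is precisely the representative of $XX$ listed in Theorem \ref{T: Logicals}; hence $XI\mapsto XX$. Applying $\sigma$ to $IX=(1,P_d(x))$ gives $(x\,P_d(x^d),1)$, and since Lemma \ref{L:Pd inverse} identifies $x\,P_d(x^d)=P_d(x)^{-1}$, this equals the alternative representative $(P_d(x)^{-1},1)$ of $IX$; hence $IX\mapsto IX$. These two identities already match the CNOT action on $X$-type logicals with the first qubit as control.

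The remaining, and main, work is the $Z$-sector together with the verification that $\sigma$ fixes the code. For stabilizer preservation I would apply the closed form to the single $X$-generator $((1+x),(1+x)P_d(x))$ of $C_2$ and to the $Z$-generator $((1+x^{-1})P_d(x^{-1}),(1+x^{-1}))$ of the $Z$-stabilizer module $C_Z$; using $(1+x)P_d(x)=1+x^d$ together with the consequence $P_d(x)P_d(x^d)\equiv x^{n-1}\pmod{x^n-1}$ of Lemma \ref{L:Pd inverse}, both generators map back into their respective modules (in fact to $T^{0}$ and $T^{1}$ multiples of themselves), and the identity $\sigma T=T^{d}\sigma$ then upgrades this to $\sigma(C_2)=C_2$ and $\sigma(C_Z)=C_Z$. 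Once $\sigma$ is known to be a stabilizer-preserving qubit permutation of a CSS code, it sends $X$-type operators to $X$-type and $Z$-type to $Z$-type, so its induced logical symplectic map is block diagonal, $\operatorname{diag}\!\big(M,(M^{-1})^{T}\big)$. Since the $X$-block computed above is the CNOT matrix $M=\left(\begin{smallmatrix}1&0\\1&1\end{smallmatrix}\right)$, the $Z$-block is forced to be $(M^{-1})^{T}=\left(\begin{smallmatrix}1&1\\0&1\end{smallmatrix}\right)$, i.e.\ $ZI\mapsto ZI$ and $IZ\mapsto ZZ$, which completes the identification of $\sigma$ with the logical CNOT.

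I expect the $Z$-sector together with stabilizer preservation to be the main obstacle: the $X$-identities drop out of the two cited results, whereas here one must either carry out the modular reductions using $P_d(x)P_d(x^d)\equiv x^{n-1}$ for both generators and manage the cyclic-module bookkeeping, or justify cleanly that a stabilizer-preserving permutation of a CSS code induces a block-diagonal logical map with the transpose-inverse $Z$-block. Either route becomes routine once the closed form for $\sigma$ and the commutation identity $\sigma T=T^{d}\sigma$ are in hand, but the algebraic care in the $Z$-reductions is where an error would most likely hide.
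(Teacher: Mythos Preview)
Your proposal is correct. The paper's proof shares the same opening moves---the closed form $\sigma((a,b))=(xb(x^d),a(x^d))$ and the verification, via $d^2\equiv -1\pmod n$ and $\gcd(d,n)=1$, that the $X$-stabilizer generators are permuted among themselves---but then finishes along a different line. Rather than conjugating logical Paulis and invoking the block-diagonal symplectic form $\operatorname{diag}(M,(M^{-1})^{T})$, the paper works directly on encoded computational-basis states: since a qubit permutation sends $\ket{c}$ to $\ket{\sigma(c)}$ and $\sigma(C_2)=C_2$, the logical state $\ket{u}_L=\sum_{c\in C_2}\ket{c+uD}$ is sent to $\ket{u'}_L$, where $u'$ labels the coset of $C_2$ in $C_1$ containing $\sigma(uD)$. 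The CNOT truth table is then read off by observing that $\sigma$ maps the $XI$ representative $(u(x),v(x))$ to the $XX$ representative $(xv(x^d),u(x^d))$ and maps the $IX$ representative $(1,P_d(x))$ to the other $IX$ representative $(P_d(x)^{-1},1)$, whence $\ket{10}\leftrightarrow\ket{11}$ and $\ket{01}\to\ket{01}$. Your Clifford-conjugation route is the standard one and generalizes cleanly to non-permutation code symmetries; the paper's state-level route exploits the special fact that permutations are diagonal in the computational basis, so the logical action is entirely determined by the induced permutation of the four cosets in $C_1/C_2$. In particular this sidesteps what you flagged as the main obstacle: the $Z$-stabilizer and $Z$-logical bookkeeping never needs to be carried out, since $\sigma(C_1)=C_1$ follows automatically once the four cosets are seen to be permuted.
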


\begin{proof}
Let $O=\Pi_{1,0}E_dS$. It acts as   
\[
O\big((u(x),v(x))\big)=(xv(x^d),u(x^d)).
\]
The operator $O$ maps each $X$-stabilizer generator $(x^i(1+x),x^i(1+x^d))$ to another stabilizer generator namely 
$(x^{id+1}(1+x^{d^2}),x^{id}(1+x^d))$, where, using the fact that $d^2\equiv -1 \pmod n$, it can be equivalently represented as $(x^{id}(1+x),x^{id}(1+x^d))$ which is another $X$-stabilizer. Since $\gcd(n,d)=1$, this permutation gives a one-to-one mapping of the stabilizer generators. 
Therefore, using (\ref{E:encoding}), one can verify that $O$, at the logical level, maps
\begin{itemize}
    \item $\ket{00}\rightarrow\ket{00}$.
    \item $\ket{01} \rightarrow \ket{01}$ because applying $O$ to $(1,P_d(x))$ gives $(P_d(x)^{-1},1)$.
    \item $\ket{10} \leftrightarrow \ket{11}$ because it swaps the first two operators given in Theorem \ref{T: Logicals}.
\end{itemize}
Hence $O$ acts as CNOT  at the logical level.  
\end{proof}

\subsubsection{Hook error and the effective weight}
Correlated errors that occur during the measurement of stabilizer generators during quantum error correction can reduce the effective minimum distance of a quantum code by forming a logical error at a lower cost \cite{fowler2012surface,tomita2014low,dennis2002topological, geher2024error, manes2025distance}. 

A particularly dangerous class of syndrome measurement circuit faults
arises when a single check qubit error propagates through syndrome extraction circuit, i.e., entangling operations such as CNOT gates, causing multiple data qubit errors, see Figure \ref{F:hook}. 
These errors are especially problematic because they can form a logical error at lower cost, if these two data
qubit errors align with a minimum weight logical error (bad hooks), making them harder to detect and correct. 
Therefore, avoiding bad hook errors can enhance the overall performance of an error-correcting code. 
Following the approach of \cite{manes2025distance}, we define {\em effective distance} of a quantum, GB, code to be the minimum number
of physical (both data and check) errors required for an undetectable logical error. 
Also, it should be emphasized that only check qubit errors propagating to
data qubits can reduce the effective distance, as data qubit errors never spread to neighboring data qubits. 

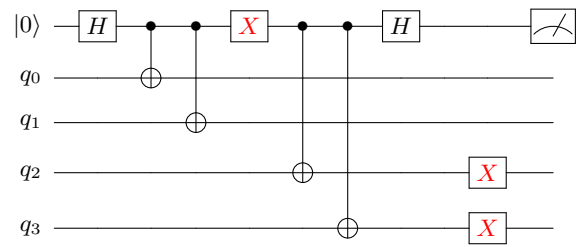
\begin{figure}[h]
    \centering
\[
\Qcircuit @C=1em @R=1em {
    & \lstick{|0\rangle} & \gate{H} & \ctrl{1} & \ctrl{2} & \gate{\textcolor{red}{X}} & \ctrl{3} & \ctrl{4} & \gate{H} & \qw & \qw & \meter  \\
    & \lstick{q_0} & \qw & \targ & \qw & \qw & \qw & \qw & \qw & \qw & \qw& \qw\\
    & \lstick{q_1} & \qw & \qw & \targ & \qw & \qw & \qw & \qw & \qw & \qw & \qw\\
    & \lstick{q_2} & \qw & \qw & \qw & \qw & \targ & \qw & \qw & \qw & \gate{\textcolor{red}{X}} & \qw\\
    & \lstick{q_3} & \qw & \qw & \qw & \qw & \qw & \targ & \qw & \qw & \gate{\textcolor{red}{X}}& \qw 
}
\]
\caption{An example of a hook error, where one error on the check qubit (top qubit) spreads to two data qubits ($q_2$ and $q_3$).}
\label{F:hook}
\end{figure}

In this section, we present a syndrome extraction pattern for the $[[d^2+1,2,d]]$ family in which hook errors do not reduce the minimum distance of the code.
Our approach is based on the minimum distance bounds developed in the previous sections.
Since applying a permutation to all $X$-stabilizers yields all $Z$-stabilizers, we restrict our discussion to the former type.
It is important to note that a general syndrome extraction pattern for such GB family is not necessarily immune to hook errors.
Indeed, at the end of this section we provide an example of a syndrome extraction pattern in which hook errors align with the structure of minimum-weight logical errors, thereby reducing the effective distance of the code.

First recall that there exists a basis of the $X$-stabilizer group which consists of $x^i(1+x,1+x^d)$, for each $0\le i \le n-2$. 
So, in order to extract the syndrome, one needs to only operate on such basis with the aid of $n-1$ check qubits. 
Moreover, recall that
the $i$-th check qubit acts as $x^i(1+x,1+x^d)$, namely it acts on physical qubits $i$, $i+1$ (corresponding to $x^i+x^{i+1}$ so called left qubits) and $x^i+x^{i+d}$ (corresponding to $x^i+x^{i+d}$ so called right qubits). Thus if we represent the first $n$ qubits by $q_0,q_1,\ldots, q_{n-1}$, and the second $n$ qubits by $q_0',q_1',\ldots, q_{n-1}'$, we conclude that the $i$-th check operates on $q_i$, $q_{i+1}$, $q'_i$, and $q'_{i+d}$.

The syndrome extraction order that we choose is to first operate as  $(0,x^i(1+x^d))$ and then $(x^i(1+x),0)$. 
We call such order {\em Right-Left (RL)} and it is shown in Figure \ref{F:RL}. 
\begin{figure}[h]
    \centering

\[
\Qcircuit @C=1em @R=1em {
    & \lstick{|0\rangle} & \gate{H} & \ctrl{1} & \ctrl{2} & \gate{\textcolor{red}{X}} & \ctrl{3} & \ctrl{4} & \gate{H} & \qw & \qw & \meter  \\
    & \lstick{q_i'} & \qw & \targ & \qw & \qw & \qw & \qw & \qw & \qw & \qw& \qw\\
    & \lstick{q_{i+d}'} & \qw & \qw & \targ & \qw & \qw & \qw & \qw & \qw & \qw & \qw\\
    & \lstick{q_{i}} & \qw & \qw & \qw & \qw & \targ & \qw & \qw & \qw & \gate{\textcolor{red}{X}} & \qw\\
    & \lstick{q_{i+1}} & \qw & \qw & \qw & \qw & \qw & \targ & \qw & \qw & \gate{\textcolor{red}{X}}& \qw 
}
\]
\caption{RL extraction pattern for the $i$-th check qubit corresponding to $x^i(1+x,1+x^d)$ X-stabilizer, and propagation of an error.}
\label{F:RL}
\end{figure}
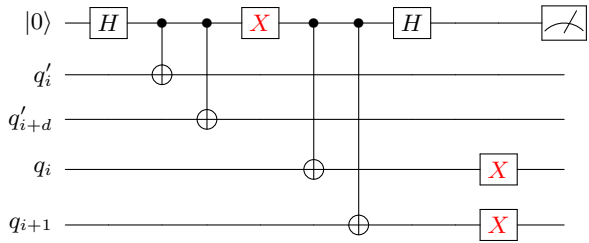
This pattern implies that all the hook errors resulted from a single error in the check qubit will be in the form  of $(x^i(1+x),0)$ for some $0\le i \le n-2$. 
All other type single errors on a check qubits can be translated to an equivalent single data qubit error. 
So we discard them in our conversation as they do not induce error propagation. 
An arbitrary combination of hook errors has the form 
$(e(x)(1+x),0)$ for some $e(x) \in \F_2[x]/\langle x^n-1 \rangle$. 
The number of errors in the check qubits, in order to obtain such combination, is $\wt(e(x))$.

Next we show that such errors would not reduce the minimum distance of the GB codes under discussion. 
Recall that effective minimum distance of a code is less than or ideally equal to its theoretical minimum distance. 

\begin{theorem}\label{T:hook1}
Let $d\ge 3$ be an odd integer. Then  the GB family $[[d^2+1,2,d]]$ has effective distance $d$ if all the syndrome extractions are performed using RL pattern.     
\end{theorem}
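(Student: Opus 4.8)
The plan is to show that the minimum number of faults needed to realize an undetectable $X$-logical error under the RL pattern is exactly $d$. By the discussion preceding the theorem, every fault configuration produces a net $X$-type data error of the form $(U,V)=(p+e(x)(1+x),\,q)$, where $(p,q)$ collects the direct data faults and $e(x)(1+x)$ is the accumulated hook contribution, at total cost $\wt(p)+\wt(q)+\wt(e)$. Such an error is an undetectable logical operator precisely when $(U,V)\in C_1\setminus C_2$. The upper bound is immediate: taking $e=0$ and letting $(U,V)$ be the weight-$d$ logical operator $(u(x),v(x))$ of Theorem \ref{T: Logicals} gives cost $\wt(u)+\wt(v)=d$. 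The real work is the matching lower bound, that every $(U,V)\in C_1\setminus C_2$ forces $\wt(p)+\wt(q)+\wt(e)\ge d$ for \emph{every} hook decomposition $U=p+e(1+x)$, $V=q$.

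For the lower bound I would first invoke the membership condition \eqref{E:membershipgen}, which for $a(x)=1+x$ and $b(x)=1+x^d$ reads $(1+x)(UP_d+V)\equiv 0$, hence $UP_d+V\in\{0,P_n(x)\}$. This splits the analysis into case (a), $UP_d+V=0$ (so $V=UP_d$ with $U$ of odd weight to stay outside $C_2$), and case (b), $UP_d+V=P_n$. In case (b) I would generate two weight-$n$ identities by multiplying the relation by $P_d$ and by its inverse $P_d^{-1}=xP_d(x^d)$ from Lemma \ref{L:Pd inverse}, using that $P_d^{-1}P_n\equiv P_n$ because $P_d^{-1}$ has odd weight $d$. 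Expanding $U=p+e(1+x)$ and bounding $\wt((1+x)P_d)=\wt(1+x^d)=2$ yields $n\le d\wt(p)+2\wt(e)+\wt(V)$ and $n\le \wt(p)+2\wt(e)+d\wt(V)$. Adding these and using $2n=d^2+1$ gives $d^2+1\le(d+1)\bigl(\wt(p)+\wt(V)\bigr)+4\wt(e)$, and a short integer optimization shows this is impossible whenever the cost $\wt(p)+\wt(V)+\wt(e)\le d-1$ for all $d\ge 3$, so cost $\ge d$.

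For case (a) I would instead exploit the explicit division $P_n=P_d\,s+r$ from \eqref{E:minword1} (equivalently Lemma \ref{L:Pd inverse}(2)), where $\wt(s)=\tfrac{d-1}{2}$, $\wt(r)=\tfrac{d+1}{2}$, and $r$ is a run of consecutive monomials so that $(1+x)r\equiv 1+x^{d(d-1)/2}$ has weight $2$. Multiplying by $U$ (odd weight, hence $UP_n=P_n$) and substituting $V=UP_d$ gives $P_n=Vs+Ur$; expanding $U=p+e(1+x)$ and again invoking $\wt((1+x)r)=2$ yields $n\le \tfrac{d-1}{2}\wt(V)+\tfrac{d+1}{2}\wt(p)+2\wt(e)$. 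Since every coefficient is at most $\tfrac{d+1}{2}$ for $d\ge 3$, this bounds the cost by $\mathrm{cost}\ge \tfrac{2n}{d+1}=\tfrac{d^2+1}{d+1}$, and because $d-1<\tfrac{d^2+1}{d+1}<d$ the integrality of the fault count forces cost $\ge d$. Together with the upper bound, both cases give effective distance exactly $d$.

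The step I expect to be the main obstacle is case (a): unlike case (b), the relation $UP_d+V=0$ carries no built-in weight-$n$ codeword, so the bound must be manufactured from the structural decomposition $P_n=P_d\,s+r$, and the whole estimate hinges on the delicate observation that multiplying the consecutive run $r$ by $(1+x)$ collapses to weight $2$ — this is exactly what lets the hook term $e(1+x)$ be charged cheaply without breaking the inequality. A secondary subtlety, common to both cases, is that the relevant real-valued bounds ($\tfrac{d^2+1}{d+1}$ in case (a) and the linear-program value in case (b)) land strictly between $d-1$ and $d$, so the final conclusion rests on the integrality of the fault count rather than on the continuous bound alone.
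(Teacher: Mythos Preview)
Your proof is correct. For case (a) (the homogeneous case $UP_d+V\equiv 0$) your argument is essentially identical to the paper's: both multiply the decomposition $P_n=P_d\,s+r$ by the odd-weight $U$, exploit $\wt\bigl((1+x)r\bigr)=2$ to charge the hook contribution $e(1+x)$ with coefficient $2$, and obtain $n\le\frac{d-1}{2}\wt(V)+\frac{d+1}{2}\wt(p)+2\wt(e)$, whence $\mathrm{cost}\ge\frac{d^2+1}{d+1}>d-1$ and integrality finishes.

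Where you genuinely diverge from the paper is case (b) (the inhomogeneous case $UP_d+V\equiv P_n$). The paper bounds directly from this single relation, obtaining $n\le 2\wt(e)+d\wt(p)+\wt(V)\le d\cdot\mathrm{cost}$; but $n/d=\frac{d^2+1}{2d}$ lies near $d/2$, so as written that inequality does not by itself yield $\mathrm{cost}\ge d$. Your device of producing a \emph{second} inequality by multiplying through by $P_d^{-1}=xP_d(x^d)$ (using $\wt(P_d^{-1})=d$ from Lemma~\ref{L:Pd inverse} and $P_d^{-1}P_n=P_n$ since $P_d^{-1}$ has odd weight), then summing to get $d^2+1\le(d+1)\bigl(\wt(p)+\wt(V)\bigr)+4\wt(e)$ and running the little integer optimisation, is exactly what is needed to push the bound up to $d$. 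So the two-inequality trick is not a stylistic variant but the substantive step that closes case (b); it buys you a tight bound where the paper's single inequality falls short. Your self-assessment is therefore inverted: case (a) is routine and matches the paper verbatim, while case (b) is where your real contribution lies.
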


\begin{proof}
Recall that $n=\frac{d^2+1}{2}$.
Let $e=(u(x),v(x))+(j(x)(1+x),0)$ be a non-trivial logical error implied by $\wt(j(x))$ check qubits errors, and the arbitrary $(u(x),v(x))$ error on the data qubits.
Hence the number of errors required to obtain $e$ is $\wt(j(x))+\wt(u(x))+\wt(v(x))$, and we show that it is at least $d$. We will use the argument developed in Section \ref{S:dis.bound}. 

Recall from (\ref{E:membershipgen}) that such error satisfies 
\[
\begin{split}
(1+x)&\big(P_d(x)(j(x)(1+x)+u(x))+v(x)\big)\\& \equiv 0 \pmod{x^n-1}.
\end{split}
\]
Hence two cases can happen:
\[
\begin{split}
\big(P_d(x)(j(x)&(1+x)+u(x))+v(x)\big) \\&\equiv 0 \pmod{x^n-1}
\end{split}
\]
or
\[
\begin{split}
\big(P_d(x)(j(x)&(1+x)+u(x))+v(x)\big) \\&\equiv P_n(x) \pmod{x^n-1}.
\end{split}
\]
First we consider the former case. Recall that, as we showed in Lemma  \ref{L:Pd inverse}, we have
\[
\begin{split}
P_n&=P_d(x)(1+x^d+x^{2d}+\cdots+x^{d(\frac{d-1}{2}-1)})\\&+(x^{d\frac{d-1}{2}}+x^{d\frac{d-1}{2}+1}+\cdots+x^{n-1}).
\end{split}
\]
Multiplying all sides by $j(x)(1+x)+u(x)$ implies that
\[
\begin{split}
P_n&=v(x)(1+x^d+x^{2d}+\cdots+x^{d(\frac{d-1}{2}-1)})\\&+(j(x)(1+x)+u(x))(x^{d\frac{d-1}{2}}+x^{d\frac{d-1}{2}+1}\\&+\cdots+x^{n-1}).
\end{split}
\]
Next we bound the weight of right hand side using the left one. In particular, using the fact that 
$(1+x)(x^{d\frac{d-1}{2}}+x^{d\frac{d-1}{2}+1}+\cdots+x^{n-1})$ has weight two, we have 
\[
\begin{split}
n &\le \frac{d-1}{2} \wt(v(x))+\frac{d+1}{2} \wt(u(x)) +2\wt(j(x))\\& \le \frac{d+1}{2}(\wt(j(x))+\wt(u(x))+\wt(v(x))).
\end{split}
\]
Therefore,
\[
d \le \wt(j(x))+\wt(u(x))+\wt(v(x))
\]
which implies that the number of faults should be at least $d$ in order to achieve such logical operator. 

Next we consider the second possibility above. Again calculating the weights of the sides implies 
\[
\begin{split}
n &\le 2 \wt(j(x))+d \wt(u(x)) +\wt(v(x))\\& \le d(\wt(j(x))+\wt(u(x))+\wt(v(x))).
\end{split}
\]
which again implies that 
\[
d \le \wt(j(x))+\wt(u(x))+\wt(v(x)).
\]
Thus this GB family has effective distance $d$. 
\end{proof}

It should be noted that replacing the RL extraction with LR implies the same result as above. 
It is mainly because that each stabilizer in the form of $x^i(1+x,1+x^d)$ can be decomposed as $(x^i(1+x),0)+(0,x^i(1+x^d))$. 
Therefore, RL extraction preserves the minimum distance if and only if LR also preserves the minimum distance. 

Our computations show that 
replacing the RL extraction with other 
syndrome extraction patterns
can reduce the effective minimum distance to approximately half of the theoretical minimum distance. 
One example of such patterns is to first operate as $(x^i,x^i)$ and then as $(x^{i+1},x^{i+d})$, and one can obtain the corresponding circuit by swapping the order of two middle CNOTs in Figure \ref{F:RL}. 
In this case, our computations, employed in Magma Computer Algebra System \cite{magma}, for $3 \le d  \le 11$ show that the effective minimum distance of such GB codes is $\lceil \frac{d}{2} \rceil$.

\section{The family of $[[d^2,2,d]]$ GB code with even $d$}\label{S:evenFamily}
Although the odd-distance GB code discussed in the previous section has been considered in the literature, the existence of an even-distance GB code family of the form $[[d^2, 2, d]]$ has been overlooked \cite{wang2022}. 
Moreover, around the same time as us, a graph theoretical proof for the existence of such family was proposed in \cite{arnault202522gbcodesclassificationcomparison}. 
However, the foundation of our discussions is based on algebraic properties of GB codes.     

In this section, we follow similar steps to those in the previous section to construct this family. First, we need the following lemma. Recall that $P_n(x)=1+x+\cdots+x^{n-1}$.

\begin{lemma}
Let $d$ be a positive even integer and $n=\frac{d^2}{2}$. Then the following holds.
\begin{enumerate}
    \item $P_{d+1}(x)$ is invertible and 
\[
\begin{split}
&P_{d+1}(x)^{-1}\equiv
x(1+x^{d+1}+x^{2(d+1)}+\cdots\\&+x^{(\frac{d}{2}-1)(d+1)})+x^{\frac{d}{2}+1}(1+x^{d+1}+x^{2(d+1)}+\cdots\\&+x^{(\frac{d}{2}-2)(d+1)})
\pmod{x^n-1}.
\end{split}
\] 
Moreover, $\wt(P_{d+1}(x)^{-1})=d-1$.
\item We have
\[
\begin{split}
P_n&=P_{d+1}(x)(x^{\frac{d}{2}+1})(1+x^{d+1}+x^{2(d+1)}+\cdots\\&+x^{(\frac{d}{2}-2)(d+1)})
+(1+x+x^{2}+\cdots+x^{\frac{d}{2}})
\end{split}
\]
and 
\[
\begin{split}
P_n&=P_{d+1}(x)^{-1}(1+x+x^2+\cdots+x^{(\frac{d}{2}-1)})\\&
+(1+x^{d+1}+x^{2(d+1)}+\cdots+x^{(\frac{d}{2}-1)(d+1)}).
\end{split}
\]
\end{enumerate}
\end{lemma}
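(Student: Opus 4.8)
The plan is to mirror the structure of the odd-distance case (Lemma \ref{L:Pd inverse}), the main engine being the geometric-series factorization $P_{mk}(x)=P_m(x)P_k(x^m)$ over $\F_2[x]$, which follows from $P_{mk}(x)=\frac{1+x^{mk}}{1+x}=\frac{1+x^m}{1+x}\cdot\frac{1+x^{mk}}{1+x^m}$. I would apply it with $m=d+1$ and $k=\frac d2$ or $k=\frac d2-1$, using the arithmetic identities $\frac d2(d+1)=n+\frac d2$ and $(\frac d2-1)(d+1)=n-\frac d2-1$ that follow from $n=\frac{d^2}{2}$. Throughout, reductions modulo $x^n-1$ will exploit the absorption rule $q(x)P_n(x)\equiv(\wt(q)\bmod 2)\,P_n(x)$, a consequence of $x^jP_n(x)\equiv P_n(x)$.

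For part (1), I would write the candidate inverse as $g(x)=x\,P_{\frac d2}(x^{d+1})+x^{\frac d2+1}P_{\frac d2-1}(x^{d+1})$, which is exactly the claimed expression, and verify $P_{d+1}(x)\,g(x)\equiv 1\pmod{x^n-1}$. Expanding with the factorization turns the two summands into the ramp polynomials $x\,P_{n+\frac d2}(x)$ and $x^{\frac d2+1}P_{n-\frac d2-1}(x)$. Reducing each modulo $x^n-1$ (the only wraparound being $x^n+\cdots+x^{n+\frac d2}\equiv 1+x+\cdots+x^{\frac d2}$) and summing over $\F_2$ collapses everything by pairwise cancellation down to the single constant term $1$. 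For the weight claim, I would note that the two supports of $g(x)$ are the arithmetic progressions $\{1+j(d+1)\}$ and $\{\tfrac d2+1+j(d+1)\}$; a short range check shows every exponent lies in $[0,n-1]$ (the largest being $\frac{d^2-d}{2}<n$) and the two progressions are disjoint because $\frac d2<d+1$, so no term is a multiple of $d+1$ apart from the other, giving weight $\frac d2+(\frac d2-1)=d-1$.

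Part (2) splits into a genuine polynomial identity and a congruence. The first displayed identity involves no inverse, so I would prove it over $\F_2[x]$ directly: the factorization yields $P_{d+1}(x)P_{\frac d2-1}(x^{d+1})=P_{n-\frac d2-1}(x)$, so after the shift by $x^{\frac d2+1}$ this term occupies precisely the exponents $\frac d2+1,\dots,n-1$, while the added block $1+x+\cdots+x^{\frac d2}$ fills $0,\dots,\frac d2$; together they tile $\{0,\dots,n-1\}$ with no overlap, which is $P_n(x)$. For the second identity I would clear the inverse by multiplying through by $P_{d+1}(x)$ and combine two facts: the reduction $P_{d+1}(x)P_{\frac d2}(x^{d+1})\equiv P_n(x)+P_{\frac d2}(x)\pmod{x^n-1}$ obtained as above, and the absorption rule, which since $\wt(P_{d+1})=d+1$ is odd gives $P_{d+1}(x)P_n(x)\equiv P_n(x)$; the two copies of $P_n(x)$ cancel and leave $P_{\frac d2}(x)$, matching the target after multiplication by $P_{d+1}(x)^{-1}$.

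The geometric-series algebra is routine; the one place demanding care, and the main obstacle, is the exponent bookkeeping. In each reduction modulo $x^n-1$ I must confirm that the ramps overlap in exactly the blocks that cancel so that nothing survives except the intended terms, and in the weight count I must verify that the two support ranges neither wrap around past $x^{n-1}$ nor collide with each other.
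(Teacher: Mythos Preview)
Your proposal is correct and is, in effect, exactly what the paper does: the paper's proof is the single sentence ``The proof follows from a straightforward calculation modulo $x^n-1$,'' and your plan supplies that calculation in full, organized around the factorization $P_{mk}(x)=P_m(x)P_k(x^m)$ in the same spirit as Lemma~\ref{L:Pd inverse}. The only cosmetic slip is the phrase ``the two copies of $P_n(x)$ cancel and leave $P_{\frac d2}(x)$'' in the second identity of part~(2); what actually happens after multiplying through by $P_{d+1}(x)$ is that both sides reduce to $P_n(x)$ (equivalently, the two copies of $P_{\frac d2}(x)$ cancel), but the verification is the same and the conclusion is unaffected.
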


\begin{proof}
The proof follows from a straightforward calculation modulo $x^n-1$.    
\end{proof}

Next we use the result of Theorem \ref{T:generaldistancGB} to prove the existence of $[[d^2,2,d]]$ family of GB code, where all the stabilizer generators have weight four. 

\begin{theorem}\label{T:QGB2}
Let $d\ge 4$ be an odd integer. Then there exists a family of $[[d^2,2,d]]$ which is $(2,4)$-regular. All such codes are degenerate to 4, except when $d=4$.
\end{theorem}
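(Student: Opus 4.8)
The plan is to mirror the proof of Theorem~\ref{T:QGB} for the odd case, replacing $P_d(x)$ by $P_{d+1}(x)$ and using the new lemma as the analogue of Lemma~\ref{L:Pd inverse}. First I would fix $f(x)=1+x$ and take $p(x)=P_{d+1}(x)$, so that the GB code is built from the polynomial pair $(1+x,\,(1+x)P_{d+1}(x))$. Since $\deg(f(x))=1$, the dimension is immediately $2$ by Theorem~\ref{T:generaldistancGB}, giving parameters $[[2n,2,d_{GB}]]=[[d^2,2,d_{GB}]]$ because $2n=d^2$. I would then verify that each stabilizer generator $x^i(1+x,\,1+x+\cdots+x^d)$ has weight four: the left block $1+x$ contributes weight two, and the right block $P_{d+1}(x)$ has weight $d+1$, which seems to overshoot. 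The key point is that $(1+x)P_{d+1}(x)=1+x^{d+1}$ has weight two, so the actual stabilizer $((1+x),(1+x)P_{d+1}(x))=(1+x,\,1+x^{d+1})$ is genuinely weight four, confirming $(2,4)$-regularity.

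Next I would compute the distance bounds. Invoking Theorem~\ref{T:generaldistancGB} with the cyclic code generated by $P_n(x)=\frac{x^n-1}{x-1}$, whose minimum distance is $n$ (it is the length-$n$ repetition code), and using the two decompositions of $P_n$ provided in the lemma, I obtain the weights $\wt(s(x))$, $\wt(r(x))$ and the weights of $p(x)^{-1}$. The lemma records $\wt(P_{d+1}(x)^{-1})=d-1$ and supplies explicit $r,s,r',s'$ so that I can read off $m$ and $m'$. The upper bound should then pin $d_{GB}$ to at most $d$ via the $\wt(s)+\wt(t)$ term or the $\wt(p^{-1})+1=d$ term, and the lower bound of the form $d/m'$ (or the symmetric combination) should force $d_{GB}\ge d$, squeezing $d_{GB}=d$ exactly as in the odd case.

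The degeneracy claim is the part I would treat most carefully. Since every stabilizer generator has weight four and $d\ge 4$, for $d>4$ we have $4<d$, so the code is degenerate with a weight-four stabilizer, i.e.\ degenerate to $4$; for $d=4$ the generator weight equals the distance, so the code fails to be strictly degenerate below its distance, which explains the exception. I would state this explicitly: the existence of a weight-four element of the stabilizer group $C_2$ whose weight is below $d_{GB}=d$ is precisely the definition of degeneracy, and this holds iff $4<d$.

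The main obstacle I anticipate is \emph{not} the existence or the regularity, which follow mechanically, but verifying that the lower bound genuinely reaches $d$ rather than stalling at $\lceil d/2\rceil$ or $d-1$. In the odd family the bound $\frac{2n}{d+1}$ collapsed neatly to $d$; here with $n=\frac{d^2}{2}$ and the relevant weight being $d-1$ rather than $d$, the arithmetic of $\frac{d}{m'}$ and the combined bound \eqref{E:lowerbound(b)} must be checked to confirm it still rounds up to $d$ and does not leave a gap. If the generic Theorem~\ref{T:generaldistancGB} bound proves too weak, I would fall back on the refined case-by-case argument used in the earlier $[[54,4,6]]$ example, splitting into cases~(a) and~(b) and bounding $\wt(u(x))+\wt(v(x))$ directly against $n$ using the explicit decompositions of $P_n$, which is the technique that previously pushed the lower bound past the naive estimate. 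I also note the statement says ``odd integer'' where it should read ``even integer'' $d\ge 4$; I would correct this typo in the hypothesis, since the construction requires $n=\frac{d^2}{2}$ to be an integer and $P_{d+1}$ to behave as described.
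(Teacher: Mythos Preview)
Your plan matches the paper's proof essentially line for line: same choice $f(x)=1+x$, $p(x)=P_{d+1}(x)$, same appeal to the lemma for $\wt(P_{d+1}^{-1})=d-1$ and the two decompositions of $P_n$, same use of Theorem~\ref{T:generaldistancGB} for both bounds, and the same degeneracy argument via the weight-four generator $(1+x,1+x^{d+1})$. Your observation about the typo (``odd'' should read ``even'') is also correct.

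Your one hesitation is unnecessary. The direct bound from Theorem~\ref{T:generaldistancGB} already closes the gap without any fallback to a case~(a)/(b) analysis: with $n=d^2/2$, the $p^{-1}$-decomposition in the lemma has $\wt(r')=\wt(s')=d/2$, so $n/m'=d$ exactly, and the symmetric bound gives $\dfrac{(\wt p+\wt p^{-1})\,n}{m+\wt p\,\wt p^{-1}}=\dfrac{2d\cdot n}{d(d+1)}=\dfrac{d^2}{d+1}$, which lies strictly between $d-1$ and $d$. Hence the minimum of the two lower bounds exceeds $d-1$, and integrality forces $d_{GB}\ge d$; together with the upper bound $\wt(P_{d+1}^{-1})+1=d$ this pins $d_{GB}=d$. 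No refined argument is needed.
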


\begin{proof}
Consider the GB code with the associated polynomials $f(x)=1+x$ and $f(x)P_{d+1}(x)$. 
The proof follows in a similar fashion as the proof of Theorem \ref{T:QGB}, except the argument about the minimum distance. Therefore, we only prove the claim about the minimum distance.

Let $d_{GB}$ be the minimum distance of such code. Using the above lemma, we have $\wt(P_{d+1}(x))=d+1$, $\wt(P_{d+1}(x)^{-1})=d-1$, and 
  \[
\begin{split}
P_n&=P_{d+1}(x)^{-1}s(x)+r(x),
\end{split}
\]
where $\wt(s(x))=\frac{d}{2}$ and $\wt(r(x))=\frac{d}{2}$.

Applying the minimum distance bounds given in Theorem \ref{T:generaldistancGB} implies that 
\[
\min \{\frac{2dn}{d(d+1)}, \frac{n}{\frac{d}{2}}\}\le d_{GB} \le \min \{d+2,d,d\}
\]
which implies that
\[
\min \{\frac{d^2}{(d+1)}, d\}\le d_{GB} \le \min \{d\}.
\]
Therefore, $d-1 < d_{GB}\le d$, which forces the minimum distance to be $d_{GB}=d$. The degeneracy part follows immediately as the stabilizer generators all have weight four.
\end{proof}

In the previous section, Theorem \ref{T:hook1}, we showed that the RL syndrome extraction pattern implies the effective minimum distance equal to the theoretical minimum distance for the $[[d^2+1,2,d]]$ GB family. 
The following theorem shows the same outcome for the GB family discussed in this section.

\begin{theorem}\label{T:hook2}
Let $d\ge 4$ be an even integer. 
Then the $[[d^2,2,d]]$ GB family has effective distance $d$ considering the RL pattern syndrome extraction.     
\end{theorem}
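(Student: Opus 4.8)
The plan is to follow the template established in the proof of Theorem~\ref{T:hook1} (the odd case), since the even-distance family is constructed analogously from $f(x)=1+x$ and $f(x)P_{d+1}(x)$, and the RL extraction pattern again produces hook errors of the form $(j(x)(1+x),0)$ for some $j(x)\in\F_2[x]/\langle x^n-1\rangle$, with $\wt(j(x))$ counting the number of check-qubit faults. First I would write a generic logical error as $e=(u(x),v(x))+(j(x)(1+x),0)$ and recall the membership condition~(\ref{E:membershipgen}): since the relevant polynomial here is $p(x)=P_{d+1}(x)$, the error must satisfy $(1+x)\bigl(P_{d+1}(x)(j(x)(1+x)+u(x))+v(x)\bigr)\equiv 0 \pmod{x^n-1}$. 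This splits into the two cases where $P_{d+1}(x)(j(x)(1+x)+u(x))+v(x)$ is congruent either to $0$ or to $P_n(x)$ modulo $x^n-1$, exactly as in the odd case.

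In each case the strategy is to multiply the appropriate decomposition of $P_n$ from the preceding lemma by the factor $w(x):=j(x)(1+x)+u(x)$, obtaining an identity whose left-hand side is $P_n$ (weight $n$) and whose right-hand side is a sum of products of low-weight polynomials with $w(x)$, $u(x)$, $v(x)$, and $j(x)$. I would then bound the weight of the right-hand side using $\wt(a(x)b(x))\le\wt(a(x))\wt(b(x))$ together with $\wt(w(x))\le\wt(j(x)(1+x))+\wt(u(x))$ and the weight-two fact that $(1+x)$ times a run of consecutive powers has weight two. For the $P_n\equiv 0$ branch I would use the second identity of the lemma, $P_n=P_{d+1}(x)^{-1}s(x)+r(x)$ with $\wt(s(x))=\wt(r(x))=\tfrac{d}{2}$, rewritten after multiplying by $w(x)$ so that the coefficients multiplying $\wt(u(x))$, $\wt(v(x))$, and $\wt(j(x))$ are all at most $\tfrac{d}{2}$ (plus the weight-two contribution from the $(1+x)$ factor on $j(x)$), yielding $n\le \tfrac{d}{2}\bigl(\wt(j(x))+\wt(u(x))+\wt(v(x))\bigr)$ and hence $d\le\wt(j(x))+\wt(u(x))+\wt(v(x))$ since $n=\tfrac{d^2}{2}$. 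For the $P_n\equiv P_n$ branch I would use the first identity of the lemma together with the weight of $P_{d+1}(x)$, arriving at an analogous inequality with a multiplier of at most $d$, again forcing the total fault weight to be at least $d$.

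The main obstacle I anticipate is bookkeeping the contribution of the hook term $j(x)(1+x)$ correctly in both branches, because $j(x)$ enters inside $w(x)=j(x)(1+x)+u(x)$ rather than as an isolated summand, so naively bounding $\wt(w(x))$ discards the cancellation structure; I would handle this exactly as in Theorem~\ref{T:hook1}, isolating the weight-two factor $(1+x)\cdot(\text{consecutive run})$ so that $\wt(j(x))$ appears multiplied only by $2$ rather than by the larger coefficient attached to $\wt(u(x))$. A second point requiring care is confirming that the two lemma identities have the stated low weights $\tfrac{d}{2}$ (and that $\wt(P_{d+1}(x)^{-1})=d-1$) so that the decisive multipliers are genuinely bounded by $\tfrac{d}{2}$ and $d$; these follow from the preceding lemma and a direct count. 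Since both branches independently yield $d\le\wt(j(x))+\wt(u(x))+\wt(v(x))$, and this lower-bounds the total number of physical faults needed to realize a non-trivial logical error, the effective distance is at least $d$, and combined with the theoretical distance $d$ from Theorem~\ref{T:QGB2} this gives effective distance exactly $d$, completing the proof.
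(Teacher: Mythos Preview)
Your plan matches the paper's own proof, which simply refers back to Theorem~\ref{T:hook1}, and your outline is exactly the intended adaptation. The treatment of the $\equiv 0$ branch is essentially right, though to land the multiplier $\tfrac{d}{2}$ on all three weights you must multiply the second lemma identity $P_n=P_{d+1}^{-1}s+r$ by $v(x)$ rather than by $w(x)$: in this branch $w=P_{d+1}^{-1}v$, so one obtains $P_n=ws+rv$ with $(1+x)s$ of weight two, whence $n\le 2\wt(j)+\tfrac{d}{2}\wt(u)+\tfrac{d}{2}\wt(v)$ and the desired bound follows (one also uses that $\wt(w)$, and hence $\wt(v)$, is odd for a nontrivial logical in this branch, so that $P_n v=P_n$).

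There is, however, a genuine gap in your $\equiv P_n$ branch: a single inequality with multiplier~$d$ only yields $\wt(j)+\wt(u)+\wt(v)\ge n/d=d/2$, not $\ge d$ (the same shortfall already appears in the written case~(b) of the template Theorem~\ref{T:hook1}). To close it you need \emph{both} directions, as in~(\ref{E:dgb1})--(\ref{E:dgb2}). From $P_{d+1}w+v=P_n$ and $\wt\bigl(P_{d+1}(1+x)\bigr)=2$ one gets $n\le 2\wt(j)+(d{+}1)\wt(u)+\wt(v)$; multiplying instead by $P_{d+1}^{-1}$ (weight $d{-}1$, odd, so $P_{d+1}^{-1}P_n=P_n$) gives $w+P_{d+1}^{-1}v=P_n$ and hence $n\le 2\wt(j)+\wt(u)+(d{-}1)\wt(v)$. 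Taking the combination $(d{-}2)\cdot(\text{first})+d\cdot(\text{second})$ produces the uniform coefficient $d^2-2$ on $\wt(u)$ and $\wt(v)$ (and $4d-4\le d^2-2$ on $\wt(j)$ for $d\ge 4$), giving $\wt(j)+\wt(u)+\wt(v)\ge d^2(d-1)/(d^2-2)>d-1$, hence $\ge d$.
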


\begin{proof}
The proof follows from an argument similar to that of Theorem \ref{T:hook1}.    
\end{proof}

It is worth noting that low-weight logical $X$ and $Z$ operators for this family of GB codes can also be found using the approach described in Section~\ref{S:logical}. 
Therefore, we omit such discussions here.

\section{Girth and code-capacity thresholds of GB codes}\label{S:girth}

\subsection{Girth of GB codes}\label{S:girth1}

In graph theory, the \textit{girth} of a graph provides a fundamental measure of its cyclic structure. 
Let $G = (V, E)$ be an undirected graph, where $V$ represents the set of vertices and $E$ the set of edges. 
Then, the \textit{girth} of $G$, denoted as $g(G)$, is the length of the shortest cycle in the graph. The girth is a critical parameter because it directly impacts the performance of message-passing decoders, such as belief propagation (BP) \cite{demarti2024decoding}. 
Since many qLDPC codes are decoded using a post-processing algorithm followed by BP, making the former more robust makes decoders to be more accurate and faster.

The importance of girth arises from the role of cycles in the Tanner graph during iterative decoding. In the absence of cycles, message-passing algorithms can achieve optimal performance, as messages propagated along the graph remain uncorrelated and independent. 
However, the presence of cycles introduces correlation effects, where messages can traverse a cycle and return to their origin, creating feedback loops that distort the reliability of the decoding process. 
Shorter cycles exacerbate this problem, as they allow correlations to build up more quickly, degrading the decoder's performance \cite{degen}. 
For generalized bicycle codes, the girth is closely tied to the algebraic properties of the code (the corresponding polynomials), and careful design can yield Tanner graphs with desirable cyclic properties. 

In this section, we analyze the girth conditions for generalized bicycle codes by studying the cyclic structure of their Tanner graphs. Specifically, we characterize all codes with girths $4$, $6$, and $8$ which is the maximum possible girth of a GB code.  

\begin{theorem}\label{T:girth4}
Let $ a(x)$ and $ b(x) \in \mathbb{F}_{2} [x] / \langle x^{n} - 1 \rangle$ be two polynomials that define a GB code. The Tanner graph associated with this code has girth $g(G) = 4$ if and only if there exist indices $i, j\neq j' \in \{0,1, \dots, n-1\}$ such that at least one of the following conditions holds: 
\begin{itemize}
    \item $a_j = a_{j+i} = 1$ and $a_{j'} = a_{j'+i} = 1$,
    \item $b_j = b_{j+i} = 1$ and $b_{j'} = b_{j'+i} = 1$,
    \item $a_j = b_{j+i} = 1$ and $a_{j'} = b_{j'+i} = 1$,
\end{itemize}
where all indices are taken modulo $n$.
\end{theorem}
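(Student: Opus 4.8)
The plan is to translate the girth-$4$ condition into a combinatorial statement about the support of $a(x)$ and $b(x)$. First I would recall that the Tanner graph of a GB code is bipartite, with data-qubit (variable) nodes on one side and check nodes on the other, and that the parity-check matrix has the block-circulant form $H=[G_{a(x)}\mid G_{b(x)}]$ for each check type. A girth-$4$ cycle in a bipartite graph is precisely a closed walk of length $4$, i.e.\ two distinct check nodes $c$ and $c'$ that are both connected to two distinct variable nodes $v$ and $v'$. So the whole problem reduces to detecting a $2\times 2$ all-ones submatrix in the check matrix $H$, with the two columns coming from distinct variable nodes and the two rows from distinct checks.

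\textbf{Exploiting the circulant structure.}
Next I would use the circulant structure to reparametrize. A row of $G_{a(x)}$ is a cyclic shift of the coefficient vector of $a(x)$, so the entry of $H$ in check-row $r$ and variable-column $s$ (within the $a$-block) equals $a_{s-r \bmod n}$, and similarly $b_{s-r\bmod n}$ in the $b$-block. The key observation is that a $2\times 2$ all-ones submatrix spanning rows $r,r'$ and columns $s,s'$ forces four support conditions. Setting $i=s-r$ and recording the relative shift between the two rows, these four conditions collapse—after a change of variables $j=s-r$, $j'=s-r'$ (or the mixed-block analogues)—into exactly the three displayed cases, one for each way the two involved columns can be distributed among the $a$-block and the $b$-block: both in the $a$-block, both in the $b$-block, or one in each. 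The index $i$ encodes the shift between the two check nodes, and $j\neq j'$ records that the two columns are genuinely distinct.

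\textbf{The two directions and the main obstacle.}
I would then prove the biconditional by establishing each direction separately. For the ``if'' direction, given indices satisfying one of the three conditions, I construct the explicit length-$4$ cycle by naming the two checks and two data qubits whose incidences are forced to be $1$, and verify they are all distinct so that the cycle is genuine. For the ``only if'' direction, I start from an arbitrary $4$-cycle, read off its two check and two variable nodes, and trace back through the circulant indexing to recover indices $i$ and $j\neq j'$ matching one of the three cases. The main obstacle I anticipate is the careful bookkeeping of the distinctness and non-degeneracy requirements: I must rule out ``cycles'' that backtrack along a single edge and ensure that the two variable nodes are distinct, which is exactly what the condition $j\neq j'$ enforces, while simultaneously checking that no case is missed when the two columns straddle the $a$- and $b$-blocks (the third, mixed case is the subtle one, since the two blocks are indexed by $a$ and $b$ respectively but share the same check-row index). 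Handling the modular arithmetic consistently—so that shifts wrap correctly and the same cycle is not double-counted across the three cases—is where the proof demands the most care, though each individual verification is routine once the indexing convention is fixed.
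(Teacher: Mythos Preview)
Your proposal is correct and follows essentially the same approach as the paper: both arguments reduce a $4$-cycle to a pair of check rows and a pair of variable columns in the block-circulant parity-check matrix, use the cyclic symmetry to fix one check at index $0$ and the other at some shift $i$, and then split into three cases according to whether the two variable nodes lie in the $a$-block, the $b$-block, or one in each. The paper's proof is considerably terser---it simply draws the canonical $4$-cycle on checks $x_0,x_i$ and reads off the three cases in one sentence---whereas you spell out the $2\times 2$ all-ones-submatrix viewpoint and the distinctness bookkeeping, but the underlying idea is identical.
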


The conditions for a GB code to have girth six is summarized below.

\begin{theorem}\label{T:girth6}
Let $ a(x), b(x) \in \mathbb{F}_{2} [x] / \langle x^{n} - 1 \rangle$ be two polynomials that define a GB code. 
The Tanner graph associated with this code has girth $g(G) = 6$ if and only if there exist indices $i\neq j,\ell , \ell', \ell'' \in \{0,1, \dots, n-1\}$ such that at least one of the following conditions holds: 
\begin{itemize}
    \item $a_{\ell} = a_{\ell+i} = 1$, $a_{\ell'} = a_{\ell'+j} = 1$ and $a_{\ell''} = a_{\ell''+i+j} = 1$
    \item $b_{\ell} = b_{\ell+i} = 1$, $b_{\ell'} = b_{\ell'+j} = 1$ and $b_{\ell''} = b_{\ell''+i+j} = 1$
    \item $a_{\ell} = a_{\ell+i} = 1$, $a_{\ell'} = a_{\ell'+j} = 1$ and $b_{\ell''} = b_{\ell''+i+j} = 1$. 
    \item $b_{\ell} = b_{\ell+i} = 1$, $b_{\ell'} = b_{\ell'+j} = 1$ and $a_{\ell''} = a_{\ell''+i+j} = 1$. 
\end{itemize}
where all indices are taken modulo $n$. 
\end{theorem}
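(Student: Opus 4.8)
The plan is to follow the same framework used for the girth-four characterization in Theorem~\ref{T:girth4}, exploiting two structural facts: the Tanner graph is bipartite, so its girth is even and the only way to have girth six is to have no $4$-cycle together with at least one $6$-cycle; and its edge structure is governed entirely by the supports of $a(x)$ and $b(x)$ through the circulant shift. Concretely, I would first record, as a dictionary lemma, that two check nodes whose indices differ by $\delta$ are joined through a common data qubit precisely when one of the defining polynomials has two support elements differing by $\delta$: a shared ``left'' qubit corresponds to a pair $a_\ell=a_{\ell+\delta}=1$, and a shared ``right'' qubit to a pair $b_\ell=b_{\ell+\delta}=1$. This is the translation that turns combinatorial cycles into the algebraic conditions appearing in the statement.

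The key observation is that a $6$-cycle in a bipartite graph alternates between three distinct check nodes $c_1,c_2,c_3$ and three distinct data qubits, and projecting it onto the check side yields a triangle $c_1\!-\!c_2\!-\!c_3\!-\!c_1$ whose consecutive offsets $i=c_1-c_2$, $j=c_2-c_3$, and $c_3-c_1=-(i+j)$ sum to zero modulo $n$. Since ``two support elements differing by $\delta$'' is symmetric under $\delta\leftrightarrow-\delta$, the third edge contributes a support pair at distance $i+j$, which is exactly the distance appearing in every bullet of the statement. Each of the three triangle edges is realised either by $a$ (shared left qubit) or by $b$ (shared right qubit), giving a priori eight type-assignments; I would then use the permutation symmetry among the three offsets — any of $i$, $j$, $i+j$ can be renamed as the ``sum'' offset after relabelling $i,j$ and using the sign symmetry above — to collapse these into the four inequivalent cases $aaa$, $bbb$, $aab$, and $abb$, which are precisely the four listed conditions.

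For the forward direction I would start from a shortest cycle, which has length six under the girth hypothesis, read off the three offsets and their realising polynomials, and record the resulting support pairs; this yields one of the four conditions together with the distinctness $i\neq j$, which is needed so that the three checks, and the three shared qubits, are genuinely distinct. For the converse I would take the support pairs guaranteed by whichever condition holds, reconstruct the three checks and the three shared qubits explicitly, and verify that the resulting closed walk of length six visits six distinct vertices and hence is an honest $6$-cycle; combined with the assumption that none of the girth-four conditions of Theorem~\ref{T:girth4} hold (equivalently, no $4$-cycle exists) and with bipartiteness, this forces the girth to be exactly six.

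I expect the main obstacle to be the non-degeneracy bookkeeping rather than the algebra: one must rule out that the length-six walk secretly repeats a check or a qubit, which would collapse it to a shorter cycle or an illegal self-intersection, and conversely that a genuine $6$-cycle does not already carry a chord producing a $4$-cycle. This is exactly where the hypothesis $i\neq j$ and the non-vanishing of the offsets enter, and the case analysis must be run carefully for each of the four conditions, paying attention to the boundary situations $i\equiv 0$, $j\equiv 0$, or $i+j\equiv 0 \pmod{n}$. The remaining step, reducing the eight type-assignments to the four stated cases, is routine once the offset-relabelling symmetry is made precise.
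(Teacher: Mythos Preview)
Your proposal is correct and follows essentially the same approach as the paper: fix a $6$-cycle through three check nodes $x_0,x_i,x_{i+j}$, translate each shared-qubit edge into a support-pair condition on $a$ or $b$ at offsets $i$, $j$, and $i+j$, and read off the four cases. You are in fact more careful than the paper, which glosses over the $8\to 4$ symmetry reduction and the converse direction with a single sentence; your explicit treatment of the offset-relabelling symmetry and the non-degeneracy bookkeeping fills exactly the gaps the paper leaves implicit.
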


Next we show that the girth of a GB code cannot exceed eight. The proof follows in two steps. First we have the following lemma. 

\begin{lemma} \label{L:girth8}
Let $a(x) = 1 + x^{i}$ and $b(x) = x^{j} + x^{j+k}$ be two polynomials with weight two in $\mathbb{F}_{2} [x] / \langle x^{n} - 1 \rangle$. Then, the girth of the Tanner graph associated with the GB code defined by $a(x)$ and $b(x)$ satisfies $g(G) \leq  8.$
\end{lemma}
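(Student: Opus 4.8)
The plan is to strip the weight-two hypothesis down to its combinatorial core and then exhibit a universal short cycle. Since $a(x)=1+x^{i}$ and $b(x)=x^{j}+x^{j+k}$ both have weight two, in the Tanner graph every data qubit has degree $2$ while every check has degree $4$. My first move is to contract each degree-two data qubit into a single edge joining its two incident checks, obtaining a $4$-regular multigraph $\Gamma$ whose vertex set I identify with $\mathbb{Z}_{n}$ (the check labels). The qubits of the first register then produce ``$a$-edges'' joining checks whose labels differ by $i$, and the qubits of the second register produce ``$b$-edges'' joining checks whose labels differ by $k$. Because the two edge types come from different registers they can never coincide, so no backtracking is introduced by the contraction, and a cycle of length $\ell$ in $\Gamma$ lifts to a cycle of length $2\ell$ in the Tanner graph. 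Hence it suffices to exhibit a cycle of length at most $4$ in $\Gamma$.

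Next I would write down the closed walk that exists for every choice of parameters, namely the commutator of the two shift generators,
\[
0 \;\xrightarrow{\ a\ }\; i \;\xrightarrow{\ b\ }\; i+k \;\xrightarrow{\ a\ }\; k \;\xrightarrow{\ b\ }\; 0 ,
\]
a closed walk of length four in $\Gamma$. Translated back through the contraction it becomes the explicit length-eight closed walk
\[
X_{0}-L_{i}-X_{i}-R_{i+j+k}-X_{i+k}-L_{i+k}-X_{k}-R_{j+k}-X_{0},
\]
where $L_{\bullet}$ and $R_{\bullet}$ denote the data qubits of the two registers and $X_{\bullet}$ the checks; each adjacency above can be verified directly from the incidence rules $L_{m}\sim X_{m},X_{m-i}$ and $R_{m}\sim X_{m-j},X_{m-j-k}$. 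If the four vertices $0,\,i,\,i+k,\,k$ are pairwise distinct, then the four edges of the commutator are automatically distinct, the walk is a genuine $4$-cycle of $\Gamma$, and the lift is an honest Tanner-graph $8$-cycle, giving $g(G)\le 8$.

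It then remains to dispose of the degenerate cases, which I expect to be the only delicate point. Since $a(x)$ and $b(x)$ have weight two we have $i\neq 0$ and $k\neq 0$, so the only possible coincidences among $0,i,i+k,k$ are $i+k\equiv 0$ or $i\equiv k \pmod n$. In the first case the pair $\{0,i\}$ is joined by both an $a$-edge and a $b$-edge, and in the second the pair $\{i,i+k\}$ is; either way $\Gamma$ contains a doubled edge, i.e.\ a $2$-cycle, which lifts to a $4$-cycle of the Tanner graph, so again $g(G)\le 8$ (in fact $\le 4$). Collecting the cases, $\Gamma$ always contains a cycle of length at most four and the Tanner graph always contains a cycle of length at most eight. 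The main obstacle is precisely this bookkeeping: one must confirm that whenever the canonical four-walk fails to be simple the forced coincidence yields a genuinely shorter cycle rather than a trivially retraced walk, and it is exactly the weight-two hypothesis (ruling out $i=0$ and $k=0$) that guarantees this.
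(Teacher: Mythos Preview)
Your proof is correct and, at its core, exhibits the same length-eight closed walk the paper writes down (the paper's cycle runs through check nodes $x_0,\,x_{-k},\,x_{-k-i},\,x_{-i}$, which is your commutator walk $0\to i\to i+k\to k\to 0$ reflected). Your contraction to the $4$-regular multigraph $\Gamma$ is a clean conceptual device the paper omits, and your explicit treatment of the degenerate coincidences $i+k\equiv 0$ and $i\equiv k$ supplies rigor that the paper's proof glosses over---it simply declares the walk a cycle without verifying distinctness---so your version is strictly more careful while following essentially the same construction.
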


The following corollary is a straightforward consequence of the previous Lemma. 

\begin{corollary} \label{C:girth8b}
Let $a(x)$ and $ b(x) \in \mathbb{F}_{2}[x] / \langle x^n - 1 \rangle$ with weights at least two. Then, the girth of the Tanner graph associated with such GB code satisfies
 $g(G) \leq 8.$
\end{corollary}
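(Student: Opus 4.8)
The plan is to reduce the general case to the weight-two case already settled in Lemma~\ref{L:girth8} by a monotonicity-of-girth argument. The key observation is that deleting monomials from $a(x)$ or $b(x)$ only deletes edges from the associated Tanner graph while leaving the vertex set intact, and girth is non-increasing under the addition of edges.

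First I would fix two nonzero terms of $a(x)$ and two of $b(x)$, say $a'(x)=x^{s_1}+x^{s_2}$ and $b'(x)=x^{t_1}+x^{t_2}$, which is possible precisely because $a$ and $b$ have weight at least two. From the definition of the circulant matrix in (\ref{E:circulant}), each monomial $x^m$ appearing in a polynomial contributes exactly one cyclic diagonal of ones to its circulant; hence the support (set of nonzero entries) of $G_{a'(x)}$ is contained in that of $G_{a(x)}$, and likewise $G_{b'(x)}\subseteq G_{b(x)}$, $G_{b'(x^{-1})}\subseteq G_{b(x^{-1})}$, and $G_{a'(x^{-1})}\subseteq G_{a(x^{-1})}$ entrywise. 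Consequently the stabilizer matrix of the GB code defined by $(a'(x),b'(x))$ has support contained in that of the code defined by $(a(x),b(x))$, so its Tanner graph $T'$ is a spanning subgraph of the full Tanner graph $T$.

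Next I would invoke the elementary fact that every cycle of $T'$ is also a cycle of $T$, so $g(T)\le g(T')$. It therefore suffices to bound $g(T')$. Since $a'(x)$ and $b'(x)$ both have weight exactly two, Proposition~\ref{P:equivalence} lets me apply an independent cyclic shift to the first coordinate block, replacing $a'(x)$ by $x^{-s_1}a'(x)=1+x^{s_2-s_1}$; this is a coordinate permutation, hence an isomorphism of Tanner graphs that preserves the girth. After this normalization $a'(x)$ has the form $1+x^{i}$ and $b'(x)$ is still a weight-two polynomial, necessarily of the form $x^{j}+x^{j+k}$. Thus Lemma~\ref{L:girth8} applies and gives $g(T')\le 8$, whence $g(T)\le g(T')\le 8$.

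The main obstacle I expect is the careful justification of the subgraph step: one must verify that all four circulant blocks appearing in the full stabilizer matrix shrink (entrywise) when monomials are dropped, including the blocks built from $a(x^{-1})$ and $b(x^{-1})$, and that the two Tanner graphs share the same check-node and qubit-node sets so that cycles transfer verbatim from $T'$ to $T$. The remaining bookkeeping---that the cyclic-shift normalization used to match the hypotheses of Lemma~\ref{L:girth8} preserves girth---is immediate from Proposition~\ref{P:equivalence} once one notes that permutation equivalence induces a Tanner-graph isomorphism.
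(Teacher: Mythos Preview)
Your argument is correct and follows the same subgraph idea as the paper, which simply notes that the Tanner graph of the weight-two truncation is a subgraph of the full Tanner graph and then appeals to Lemma~\ref{L:girth8}. Your additional normalization step via Proposition~\ref{P:equivalence} is extra care that the paper's one-line proof leaves implicit, but it does not change the underlying approach.
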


Using the above results, one can show that the girth eight case is achievable by the GB codes discussed in Sections \ref{S:oddfamily} and \ref{S:evenFamily}. 
In particular, these GB codes do not satisfy the conditions of Theorems \ref{T:girth4} and \ref{T:girth6}. Hence Corollary \ref{C:girth8b} implies that  they have girth eight.

\subsection{Code capacity thresholds}

The algebraic structure, low connectivity, and high Tanner graph girth of the GB families discussed in Sections \ref{S:oddfamily} and \ref{S:evenFamily} indicate that they can potentially have high thresholds. 
Motivated by this, we extracted the logical error rate as a function of physical error rate for these two families of GB codes.

First, we analyzed the performance of GB codes in terms of the logical error probability $p_{\mathrm{log}},$
defined as the probability that the decoder fails to correctly identify an error given its syndrome, as a function of the physical error probability
$p_{\mathrm{phys}} = p_X + p_Y + p_Z,$ using the BP-OSD and MWPM  decoders \cite{panteleev2021degenerate,higgott2021pymatchingpythonpackagedecoding}. MWPM can be used because the GB codes have a matching structure, i.e. the syndromes triggered by errors have weight two.
We adopt an independent and identically distributed depolarizing error model, where each physical qubit independently experiences one of the three nontrivial Pauli errors with equal probability $p_X = p_Y = p_Z = \frac{p_{\mathrm{phys}}}{3},$
where $p_{\mathrm{phys}}$
is the total probability of a physical error.

To estimate the logical error rate $ p_{\mathrm{log}}$ as a function of $p_{\mathrm{phys}}$, a Monte Carlo simulations were employed. For each code instance, the logical failure rate under the mentioned decoders was estimated across a range of physical error rates, where  $10^{-4} \leq p_{\mathrm{phys}} \leq 10^{-0.1}\approx 0.79$, using $10^{5}$ repeated trials. 
This procedure was carried out using the open-source package \texttt{qLDPC} \cite{perlin2023qldpc}, where the decoders are based on the packages \texttt{ldpc} \cite{Roffe_LDPC_Python_tools_2022} and \texttt{Pymatching} \cite{higgott2021pymatchingpythonpackagedecoding}.

\begin{figure}[htbp]
\begin{tikzpicture}
  \node[anchor=south west, inner sep=0] (main) at (0,0) {\includegraphics[width=0.42\textwidth]{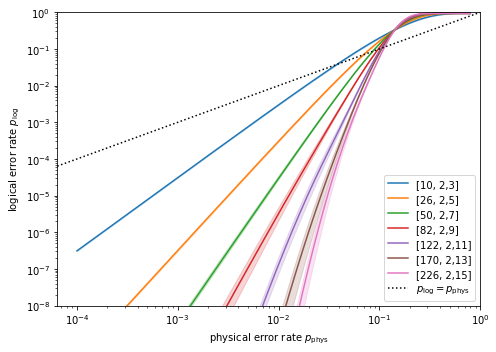}};
    \begin{scope}[x={(main.south east)}, y={(main.north west)}] 
    \node[anchor=south west] at (0.03,1.0) {\includegraphics[width=0.40\textwidth]{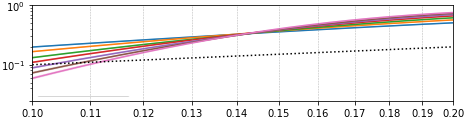}};
    \draw[gray, thick] (0.74,0.85) rectangle ++(0.1,0.1); 
    \coordinate (A) at (0.79,0.95); 
    \draw[gray, thick] (A) -- (0.727,1.07);  
    \end{scope}
\end{tikzpicture}
    \caption{
        Logical versus physical error rate for $[[d^2+1,2,d]]$ GB family and odd $d$ using BP-OSD. Wide view shows the full range of error rates, while close-up focuses on the threshold region.
    }
    \label{fig:logical_error_rates}
\end{figure}

\begin{figure}[htbp]
\begin{tikzpicture}
  \node[anchor=south west, inner sep=0] (main) at (0,0) {\includegraphics[width=0.42\textwidth]{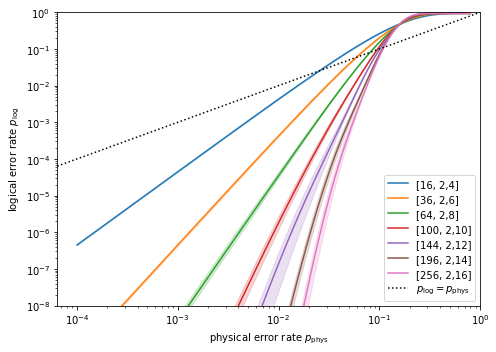}};
  \begin{scope}[x={(main.south east)}, y={(main.north west)}] 
    \node[anchor=south west] at (0.03,1.0) {\includegraphics[width=0.40\textwidth]{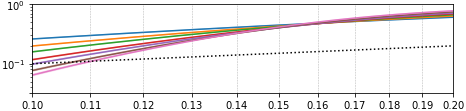}};
    \draw[gray, thick] (0.74,0.85) rectangle ++(0.1,0.1); 
    \coordinate (A) at (0.79,0.95); 
    \draw[gray, thick] (A) -- (0.720,1.077);  
    \end{scope}
\end{tikzpicture}
    \caption{
        Logical versus physical error rate for $[[d^2,2,d]]$ GB family and even $d$ using BP-OSD. Wide view shows the full range of error rates, while close-up focuses on the threshold region.
    }
    \label{fig:logical_error_rates2}
\end{figure}

The resulting functions based on BP-OSD decoder are illustrated in Figures \ref{fig:logical_error_rates} and \ref{fig:logical_error_rates2}, corresponding to the GB code families described in Sections \ref{S:oddfamily} and \ref{S:evenFamily}, respectively. 
As evidenced by the plots, an estimated threshold of approximately 
$14.5\%$ is observed for the odd-distance family, while the even-distance family exhibits a threshold near 
$16\%$.

It should be mentioned that a code capacity threshold of about $15\%$ was previously reported for another family of GB codes with the check connectivity six, under depolarizing noise, and using the BP-OSD decoder \cite[Figure 9]{panteleev2021degenerate}.

\begin{figure}[htbp]
    \begin{tikzpicture}
  \node[anchor=south west, inner sep=0] (main) at (0,0) {\includegraphics[width=0.42\textwidth]{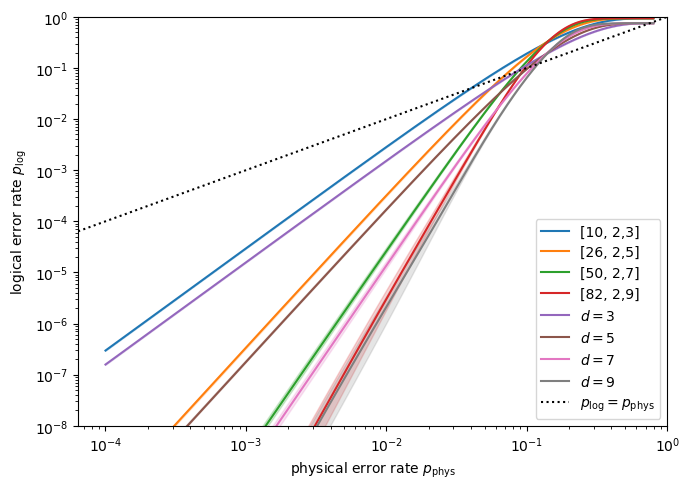}};

  \begin{scope}[x={(main.south east)}, y={(main.north west)}] 
    \node[anchor=south west] at (0.03,1.0) {\includegraphics[width=0.40\textwidth]{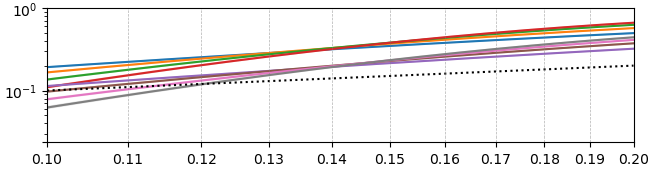}};
    \draw[gray, thick] (0.74,0.85) rectangle ++(0.1,0.1); 
    \coordinate (A) at (0.79,0.95); 
    \draw[gray, thick] (A) -- (0.726,1.08);  
    \end{scope}
\end{tikzpicture}
    \caption{
        Logical versus physical error rate  for $[[d^2+1,2,d]]$ GB family and odd $d$ compared to rotated surface codes of the same distance (showed by $d=3,5,7,9$). Wide view  shows the full range of error rates, while close-up focuses on the threshold region.
    }
    \label{fig:logical_error_rates3}
\end{figure}

\begin{figure}[htbp]
 \begin{tikzpicture}
  \node[anchor=south west, inner sep=0] (main) at (0,0) {\includegraphics[width=0.42\textwidth]{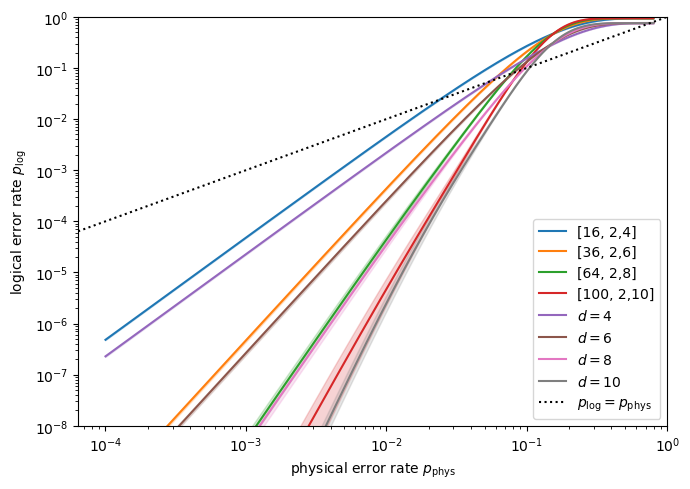}};
  \begin{scope}[x={(main.south east)}, y={(main.north west)}] 
    \node[anchor=south west] at (0.03,1.0) {\includegraphics[width=0.40\textwidth]{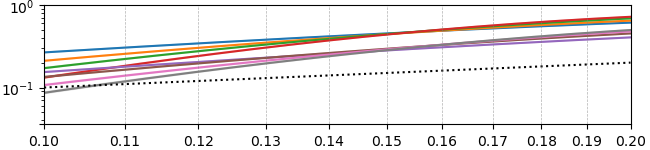}};
    \draw[gray, thick] (0.74,0.85) rectangle ++(0.1,0.1); 
    \coordinate (A) at (0.79,0.95); 
    \draw[gray, thick] (A) -- (0.726,1.078);  
    \end{scope}
\end{tikzpicture}
    \caption{Logical versus physical error rate  for $[[d^2,2,d]]$ GB family and even $d$ compared to rotated surface codes of the same distance (showed by $d=4,6,8,10$). Wide view shows the full range of error rates, while close-up focuses on the threshold region.
    }
    \label{fig:logical_error_rates4}
\end{figure}

We also compared the performance of the GB code families with that of the rotated surface code having the same minimum distance, under the same noise model and MWPM decoder \cite{fowler2012towards,higgott2021pymatchingpythonpackagedecoding}. 
For further details on the performance of surface codes under various decoders and noise models, see \cite{demarti2024decoding}.
The results, shown in Figures~\ref{fig:logical_error_rates3} and~\ref{fig:logical_error_rates4}, indicate comparable performance. Specifically, the odd-distance GB family achieves a threshold of approximately $14\%$, which is nearly identical to that of the surface code. 
For the even-distance GB family, the threshold is around $15.5\%$, also closely matching the performance of the surface code.

The close similarity in threshold performance between GB codes and surface codes highlights the strong potential of GB codes, making them alternative candidates for fault-tolerant quantum computing. 
This result is particularly interesting considering the better rate, low-weight parity checks, and highly regular structure of GB codes, 
which may facilitate more efficient hardware implementation while preserving threshold performance comparable to that of surface codes.

\section{Conclusion and future directions}
In this work, we first established a natural connection between GB codes and additive cyclic codes over $\mathbb{F}_4$.
We then proposed novel minimum distance bounds for certain GB codes, enabling us to demonstrate the existence of two families of highly degenerate GB codes with parameters $[[d^2+1,2,d]]$ for odd $d \geq 3$ and $[[d^2,2,d]]$ for even $d \geq 4$, both exhibiting check-connectivity four.
We analyzed the structure of specific logical operators within the first family, identifying configurations that require only minimal interaction or simple relabeling of physical qubits.

A syndrome extraction pattern was proposed for both families, ensuring that the effective distance of the code remains equal to its theoretical distance even in the presence of check qubit errors.
We also characterized all possible girths for GB codes, showing that these two families attain the maximum girth achievable within the GB family.

Finally, we evaluated the code capacity performance of these GB families under depolarizing noise using both BP-OSD and MWPM decoders. The matching structure of this codes is desireable, making it possible to efficiently decode those with fast decoders such as MWPM or BP plus Ordered Tanner Forest (BP+OTF) decoders \cite{bpotf_2024}.
The results indicate threshold behavior and performance comparable to those of rotated surface codes.

This work highlights two families of GB codes as promising candidates among a broader, largely unexplored family of GB codes.
The observed threshold similarity with surface codes motivates future research into the systematic construction of additional GB code classes, potentially by means of a similar approach as of the distance bounds introduced in Section \ref{S:dis.bound}.
We also plan to assess the performance of the proposed GB families under more realistic conditions, such as circuit-level noise models, and to study their connection to other related quantum code families.

\section{Acknowledgement}

This work has been supported by the Spanish Ministry of Economy and Competitiveness through the MADDIE project (Grant No. PID2022-137099NB-C44); and the Ministry of Economic Affairs and Digital Transformation of the Spanish Government through the QUANTUM ENIA project call - Quantum Spain project, and by the European Union through the Recovery, Transformation and Resilience Plan - NextGenerationEU within the framework of the “Digital Spain 2026 Agenda”. 
We would like to thank the members of the Quantum Information Group at Tecnun-University of Navarra for their support and valuable discussions.

\bibliographystyle{unsrt}
\bibliography{Gbfamilies}

\appendix
\section{Proofs of Section \ref{S:girth1}}

For the rest of this section, we label qubits in a GB codes corresponding to polynomials $a(x)$ and $b(x)$ with $q_i$ and $q_i'$, respectively. 

\begin{proof}
{\bf Theorem \ref{T:girth4}:}
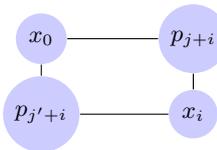
\begin{figure}[ht]
\begin{tikzpicture}
  [scale=.5,auto=left,every node/.style={circle,fill=blue!20}]
  \node (n4) at (4,8)  {$x_0$};
  \node (n5) at (8,8)  {$p_{j+i}$};
  \node (n2) at (8,6)  {$x_{i}$ };
  \node (n3) at (4,6)  {$p_{j'+i}$};
  \foreach \from/\to in {n4/n5,n5/n2,n2/n3,n3/n4}
    \draw (\from) -- (\to);
\end{tikzpicture}
\captionsetup{width=\linewidth}
\caption{An example of a girth four. Here $x_0$ and $x_i$ are check qubits, and the rest are data qubits.}
\label{F:girth4}
\end{figure}

Each cycle of length 4 is equivalent to a graph of shown in Figure \ref{F:girth4}. 
So we consider this graph as our base case and, WLOG, we assume that the 4 cycle is happening at check qubits $0$ and $i$ for some $1 \le i \le n-1$. Such cycle is in correspondence to $(a(x),b(x))$ and $(x^ia(x),x^ib(x))$. Such 4 cycle consists of two physical qubits $p_{j+i}$ and $p_{j'+i}$. such physical qubits can be the label of non-zero coefficients of $a(x)$, $b(x)$, or one coefficient in $a(x)$ and one in $b(x)$. 
This gives precisely the three cases above.   
$\hfill \square$ \end{proof}

\begin{proof}
{\bf Theorem \ref{T:girth6}:}

\begin{figure}[ht]
\begin{tikzpicture}
  [scale=.5,auto=left,every node/.style={circle,fill=blue!20}]
  \node (n6) at (2,7) {$v''$};
  \node (n4) at (4,8)  {$x_{0}$};
  \node (n5) at (8,8)  {$v$};
  \node (n1) at (10,7) {$x_{i}$};
  \node (n2) at (8,6)  {$v'$};
  \node (n3) at (4,6)  {$x_{i+j}$};
  \foreach \from/\to in {n6/n4,n4/n5,n5/n1,n1/n2,n6/n3,n2/n3}
    \draw (\from) -- (\to);
\end{tikzpicture}
\captionsetup{width=\linewidth}
\caption{An example of a girth six. Here $x_0$, $x_i$, and $x_{i+j}$ are check qubits, and the rest are data qubits.}
\label{F:girth6}
\end{figure}
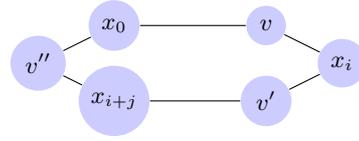
Again we consider the graph shown in Figure \ref{F:girth6} as our base case, and base our reasoning using that.  First, note that the case $i=j$ implies that the girth is four by Theorem \ref{T:girth4}. 
So we assume that $i \ne j$.

The connection between $v$, $x_0$, and $x_i$ implies that there exists $\ell \in \{0,1, \dots, n-1\}$ such that $a_{\ell}=a_{\ell+i}=1$ (or $b_{\ell}=b_{\ell+i}=1$). 
In this case, $v=a_{\ell+i}$ (respectively $v=b_{\ell+i}$). 

The connection between $v'$, $x_i$, and $x_{i+j}$ implies that there exists $\ell' \in \{0,1, \dots, n-1\}$ such that $a_{\ell'}=a_{\ell'+j}=1$ (or $b_{\ell'}=b_{\ell'+j}=1$). In this case, $v'=a_{\ell'+i+j}$ (respectively $v'=b_{\ell'+i+j}$). 

Finally, the connection between $v''$, $x_{i+j}$, and $x_{0}$ implies that there exists $\ell'' \in \{0,1, \dots, n-1\}$ such that $a_{\ell''}=a_{\ell''+i+j}=1$ (or $b_{\ell''}=b_{\ell''+i+j}=1$). In this case, $v''=a_{\ell''+i+j}$ (respectively $v'=b_{\ell''+i+j}$).

A combination of the above possibilities give one of the four cases given in the theorem. 
$\hfill \square$ \end{proof} 

\begin{proof}
{\bf Lemma \ref{L:girth8}:}
Note that each check node \( x_t \) is connected to four physical qubits, namely:
\begin{equation*}
    q_t, \quad q_{t+i}, \quad q'_{t+j}, \quad q'_{t+j+k}.
\end{equation*}
More specifically, we have
\begin{itemize}
    \item A qubit node $q_{t}$ (corresponding to $t$-th coefficient of $a(x)$) is connected to check nodes $x_{t}$ and $x_{t-i}$.
    \item A qubit node $q^{'}_{t}$ (corresponding to $t$-th coefficient of $b(x)$) is connected to check nodes  $x_{t-j}$ and $x_{t-j-k}$.
\end{itemize}
Now, we construct an explicit cycle of length 8 as: 
\begin{center}
\begin{tikzpicture}
  [scale=.8,auto=left,every node/.style={circle,fill=blue!20}]
  
  \node (n0) at (0,0)    {$q_{0}$};
  \node (n1) at (2,0)    {$x_{0}$};
  \node (n2) at (4,0)    {$q_{j}'$};
  \node (n3) at (6,0)    {$x_{-k}$};
  \node (n4) at (6,-2)   {$q_{-k}$};
  \node (n5) at (4,-2)   {$x_{-k-i}$};
  \node (n6) at (2,-2)   {$q^{'}_{j-i}$};
  \node (n7) at (0,-2)   {$x_{-i}$};

  \foreach \from/\to in {n0/n1,n1/n2,n2/n3,n3/n4,n4/n5,n5/n6,n6/n7,n7/n0}
    \draw (\from) -- (\to);
\end{tikzpicture}
\end{center}
Thus, $g(G) \leq 8.$
$\hfill \square$ \end{proof} 

\begin{proof}
{\bf Corollary \ref{C:girth8b}:}
 The proof follows from the previous lemma and the fact that Tanner graph of such GB code has a subgraph which is the Tanner graph of the GB code of Lemma \ref{L:girth8}.   
$\hfill \square$ \end{proof} 

\end{document}